\documentclass[12pt]{article}
\usepackage[tmargin=1in,bmargin=1in,lmargin=1.25in,rmargin=1.25in]{geometry}

\usepackage{setspace}\doublespacing
\usepackage{csquotes}
\usepackage{kpfonts}
\usepackage{tikz-cd}
\usepackage{amsmath}
\usepackage{amsthm}
\usepackage{amsfonts}
\usepackage{stmaryrd}
\usepackage{dsfont}
\usepackage{graphicx}
\usepackage[title]{appendix}
\usepackage[font=small,labelfont=bf]{caption}
\usepackage{lipsum}
\usepackage{mwe}
\usepackage{comment}
\graphicspath{ {images/} }

\theoremstyle{definition}
\newtheorem{definition}{Definition}[section]

\theoremstyle{plain}
\newtheorem{theorem}{Theorem}[section]

\newtheorem{proposition}[theorem]{Proposition}

\title{Symmetries as Isomorphisms}
\date{}
\author{Lu Chen\thanks{chen.l@usc.edu}}
\begin{document}
	\maketitle

\begin{abstract}
	Symmetries and isomorphisms play similar conceptual roles when we consider how models represent physical situations, but they are formally distinct, as two models related by symmetries are not typically isomorphic. I offer a rigorous categorical strategy that formulate symmetries as isomorphisms between models and apply it to classical electromagnetism, and evaluate its philosophical significance in relation to the recent debate between `sophistication' and `reduction'. In addition to traditional spacetime models, I also consider algebraic models, in which case we can use the method of natural operators to address the problem of ontological nonperspicuity faced by the categorical strategy. Finally, I briefly expound on the significance of symmetries as isomorphisms in the framework of Univalent Foundations, in which isomorphic structures are formally identified.

\end{abstract}

\tableofcontents

\section{Introduction}

 Symmetries of a theory are transformations of its models that leave the relevant physics unaltered. They are typically not isomorphisms, which are structure-preserving maps between two structures (typically of the same kind).  In this paper, I will provide a strategy for reformulating symmetries between models as isomorphisms. There are several reasons for doing this. First, symmetries and isomorphisms play similar conceptual roles: symmetry-related models (SRMs) are considered to represent the same physical situation just like isomorphic models, and both are a guidance for identifying redundancies in our models or theories (see, for example, Dasgupta [2016]). It would therefore be attractive to unify their mathematical forms. Second, the nature of isomorphism is relatively well understood: it is as far as one can go in identifying a mathematical model (the role of which is highlighted by Wallace [2022]). Indeed, isomorphic models are formally identified in Univalent Foundations (UF), an alternative framework to standard set theory.\footnote{\label{Uf} The research presented in this paper was initially a section of a more extensive work on UF. I will briefly discuss UF in Section 5. While it helps to know the background of UF, I hope the paper nevertheless makes a stand-alone contribution.} Thirdly, reducing symmetries to isomorphisms is not the same as dispensing with them altogether (which would be the case if we reduce symmetries to identity maps). This is crucial for preserving the important role of gauge symmetries in physics (see, for example, Nguyen et al [2020], Tong [2018]). Moreover, the models resulting from systematically reformulating symmetries as isomorphisms are not theoretically equivalent to other versions, and therefore enrich our understanding of the relevant physics. Even when such a reformation is not necessary, it is beneficial to expand our understanding by exploring alternative modelling approaches with their distinct features.
 
There is a recent related debate between \emph{sophistication} and \emph{reduction} in the recent literature on symmetries (in Dewar's [2019a] terminology). According to sophisticationists, SRMs \textit{invariably} represent the same physical situation (see Martens and Read [2021]). Reductionists in contrast hold that, for any SRMs, we need to identify a reduced model that captures their common ontology in order to claim that these SRMs really do represent the same physical situation.  Dewar further distinguishes between `external sophistication' and `internal sophistication'. According to the former, SRMs are isomorphic models `by fiat', or in other words,  we can treat SRMs `as if' they were isomorphic models. Sophistication then follows from the principle of \textit{sophisticated substantivalism} that isomorphic models represent the same physical situation. According to the latter,  we should formulate SRMs as models whose internal structures can be shown to be isomorphic under the symmetries. External sophistication has been heavily critiqued. Indeed, when sophistication is under attack, the attack typically targets at the external approach (Marten and Read [2021], Jacob [2022]; see also March [forthcoming]). In contrast, the internal approach aligns with the more familiar fibre-bundle approach to physics that many advocate (see, for example, Healey [2007] and Jacobs [2023]). 

External and internal sophistication shouldn't be regarded as two distinct views about symmetries but rather two complementary strategies for implementing sophistication, at least the way Dewar presents them. The external route is criticized for its opacity, while the problem with the internal route is that it may not be available and thus does not guarantee that SRMs are invariably physically equivalent.  My goal is to develop and defend sophistication (1) making external sophistication mathematically rigorous through a  category-theoretic strategy, and (2) introducing a framework of algebraic models in which we can systematically `internalize' external sophistication. The resulting models, however, should be distinguished from those fibre-bundle models featured in Dewar's internal sophistication. While my position is broadly sophisticationist, I should note in advance that it differs in detail from the positions of sophisticationists presented in the literature, which will become clear later.\footnote{My main position is to defend SRMs invariably represent the same physical situation by defending the category-theoretical strategy of unifying the formal treatment of symmetries with familiar isomorphisms. My position is not that models for which symmetries are (nontrivial) isomorphisms are better models of reality. This may be true, but it should not be decided solely on the ground of the nature of symmetries.} 

The category-theoretic strategy in (1) is not completely novel (see Weatherall [2016]).\footnote{The formalism I appeal to is slightly more formal than what is presented by Weatherall.} But it is often presented in a way that conflates external sophistication with the internal one, namely the fibre-bundle picture, or with the quotient view that appeals to equivalence classes of SRMs.\footnote{In Weatherall [2016], the category of `external models' is shown to be equivalent to the category of fibre-bundle models (i.e.,`internal models'). This further conflates the two strategies under the criterion of theoretical equivalence based on categorical equivalence.  } Thus, apart from presenting a rigorous strategy for external sophistication, I also aim at peeling apart various approaches that are usually conflated but should be held as distinct (Section 3).


The `internalization' strategy in (2) is based on the algebraic framework developed by Chen and Fritz ([2021]). A distinctive aspect of algebraic models is that we can further employ a mathematical procedure to reveal their internal structures, which is not the case for traditional models. Traditional spacetime models take form in $\langle M, g, \{\phi\}\rangle$, where $M$ is a spacetime manifold, $g$ the metric field, $\{\phi\}$ other physical fields. 
After we reformulate symmetries as isomorphisms according to our strategy, we do not have a way to know the internal structure of the resulting models.\footnote{This focus on external behaviors of models under symmetries instead of their internal structures is reminiscent of the Klein tradition of geometry, according to which geometry is solely characterized by its symmetry group (Klein [1892], see also Norton [1993]). This tradition has been defended, for example, by Wallace ([2019]). The first part of the paper can be seen as engaging with this tradition, generalizing from geometry to spacetime models.} This is criticized as not ontologically perspicuous, among other criticisms.\footnote{I do not think, however, this is a decisive criticism against the category-theoretic strategy, as I will explain later.} 

Algebraic models can help overcome this problem (to the extent that they are independently feasible). These models take form in $\mathfrak{A}(\Psi,\Phi, ...)$, where $\Psi, \Phi,...$ are all possible configurations of physical fields and $\mathfrak{A}$ is the algebraic structure that exhaustively characterizes their structural relations with each other.\footnote{Here, I focus on the new literature on algebraic models starting with Chen and Fritz ([2021]). The old algebraic models introduced by Geroch ([1972]) are not substantially different from traditional spacetime models (see Rosenstock et al [2015]). Note that the capital Greek letters here represent physical fields that consist of \emph{all} their field configurations. In contrast, the lower-cased Greek letter used for the spacetime models refers to a particular field configuration.} Such models depart from traditional spacetime models in that they no longer posit an underlying spacetime manifold on which physical fields are defined. Regarding these models, there is a mathematically well-defined procedure of obtaining the algebraic structure invariant under given symmetries based on the tool of `natural operators' (Section 4; see Kolar et al [1993]).

In Section 5, I will further highlight the significance of isomorphisms---and thereby of sophistication---by appealing to UF, a new logico-mathematical framework in which we formally identify structures that are isomorphic. In UF then, symmetries are reconstrued as identity maps (where `identity' is a technical notion in UF, which I argue elsewhere to be interpreted realistically), which gives a rigorous implementation of the idea that symmetries in question are indeed representational redundancies. (Regarding technical background, this paper is suitable for readers with a general understanding of category theory and mathematical physics.\footnote{One of my future plans continuing from this project is to engage with higher categories, which can be a powerful tool for higher symmetries (see Bhardwaj et al [2023]). In particular, it would be useful for discussing higher gauge theory which expands our theoretical framework for gravity (see Baez and Wise [2014]). })

\section{Examples of Symmetries}

Let me start with the simple case of a global velocity boost in Newtonian mechanics. Let  $\langle E, v\rangle$, $\langle E, v'\rangle$ be models related by this velocity boost, where $E$ is a Euclidean space, and $v,v'$ are maps from point particles (a subset of the point set in $E$) to velocity vectors such that $v-v'= constant$. The two models are not isomorphic since except for very specific $v$ and $v'$, there is no automorphism $f$ on $E$ that takes $v$ to $v'$ (namely $v'=f_*v$, where $f_*$ is the pushforward of $f$). According to the standard approach, such symmetry-related models represent the very same physical situations. The difference between the models reflects a discrepancy between our representation and what is  being represented, namely that there is a surplus structure in our models. It is thus desirable to dispense with the surplus structure. A typical approach is to come up with reduced models that do not have the variant quantities under symmetry transformations. We can achieve this, for example, by simply substituting the velocity field with an acceleration field.




Let's now turn to gauge theories, which will be the main focus of the paper.\footnote{This paper primarily discusses classical gauge symmetries. But the underlying principles should apply to other formal symmetries as well, such as the CPT symmetries in quantum theory. I intend to focus on these quantum cases in the future.} Classical electromagnetism is an exemplary example and indeed much discussed in the literature that I engage with.  Readers familiar with the literature should feel free to skip Section 2.1, and only refer back to it when necessary.\footnote{Note that my presentation of the background is more detailed than standard in this corner of the literature, because I aim at more accessibility for technically minded readers who don't have the background and wish to pick up the relevant physics along the way. }

\subsection{Four Models of Electromagnetism}

 The electromagnetic potential field, represented by  a four-covector $A_\mu$ (where $\mu$ is the index for temporal and spatial components), plays a crucial role in determining the behavior of electromagnetic fields. However, a key insight of classical electromagnetism is that the absolute value of this potential is not directly observable. This means that different values of  $A_\mu$  can lead to the same physical predictions as long as their differences can be accounted for by a gauge transformation.   Let $\phi$ be any scalar field. Let $x$ range over spacetime points. Then a gauge transformation on the field can be written as $A_\mu(x)\to A_\mu(x)+\partial _\mu \phi(x)$. (This is a `local symmetry' in the sense that we can transform the value at each spacetime point differently, as opposed to a `global symmetry' like velocity boost, where the value as each spacetime point is changed uniformly.) The Maxwell equations that govern classical electromagnetism are invariant under gauge transformations.  Note that, like in the case of velocity boost, models like $\langle M, A_\mu\rangle$ and $\langle M, A'_\mu\rangle$, where $M$ is a Lorentzian manifold, related by a gauge transformation are not isomorphic. 
 
To dispense with this representational redundancy of $A_\mu$ models, we can formulate a reduced model invariant under gauge transformation by appealing to the electromagnetic force field $F_{\mu\nu}=\partial_\mu A_\nu-\partial_\nu A_\mu$, which is a rank-2 antisymmetric tensor.   It is easy to check that $F_{\mu\nu}$ is indeed invariant under the gauge transformation $A'_\mu= A_\mu+\partial _\mu \phi$.\footnote{$F'_{\mu\nu} = \partial_{\mu}A'_{\nu} - \partial_{\nu}A'_{\mu}
= \partial_{\mu}A_{\nu} - \partial_{\nu}A_{\mu} + \partial_{\mu}\partial_{\nu}\phi - \partial_{\nu}\partial_{\mu}\phi
= F_{\mu\nu}.$} Therefore, the resulting model $\langle M, F_{\mu\nu}\rangle$ is invariant under gauge transformations.\footnote{There are, however, controversies on whether $F_{\mu\nu}$ exhausts the physical content of the electromagnetic field, since it may not encode all the physically relevant information invariant under the gauge transformations. See Healey [2007] and Dewar [2019a]. We will return to this in Section 3.} 

The mathematics of electromagnetism can be equivalently written in terms of differential forms, which is mathematically more elegant and will be useful for our discussions. In this framework, the potential field is represented by a differential 1-form $A$, with gauge transformations expressed as $A\to A+d\phi$, where $d$ is exterior derivative and $\phi$ a scalar field (also a 0-form). $F$ is a 2-form such that $F=dA$. Then, we have $F'=dA+d(d\phi)=dA=F$.\footnote{Nilpotency, namely $\mathrm{d}(\mathrm{d}\omega) = 0$ for any form $\omega$, is a defining condition for exterior derivatives.} So again, $F$ is invariant under gauge transformations. 

Note that gauge transformations are not (nontrivial) isomorphisms for either $A$-models or $F$-models. They are not isomorphisms in the case of $A$-models for the similar reason as in the case of velocity boost. As for $F$-models, a gauge transformation on $A$ is simply an identity map. (While diffeomorphisms on $M$ can induce isomorphic models to  $\langle M, F\rangle$, these are not relevant for the symmetries currently under discussion.) 

Like the case of velocity boost, there is another route to reformulate electromagnetic models. Indeed, we can reformulate gauge symmetries as isomorphisms by resorting to the framework of principal bundle (PB). In the PB picture, the potential field can be formulated as a (Lie-algebra-valued) connection 1-form $\omega$ on the $U(1)$-principal bundle $P$, which in the simplest case is the product space $S^1\times M$ (which consists of a circle for each spacetime point---thus also known as `a circle bundle'). A `connection' on the bundle indicates how one point in a circle connects to another point in a `neighboring' fiber and thus how to move in the bundle (that is, it takes a tangent vector as the input, which is like an infinitesimal path, and gives a Lie-algebra element as the output, namely how an element of a fiber transforms to another along this path; see Appendix A). In this picture, the gauge transformations can be visualized as arbitrary smooth rotations of these circle fibers (like twisting the Slinky). Since there is no fact as for which points in different fibers are really `neighbors', this picture clearly illustrates  gauge transformations as merely a feature of our representational redundancy. Note that the resulting $\omega$-models $\langle P, \omega\rangle$ and $\langle P,\omega'\rangle$ related by a gauge transformation are isomorphic (Appendix A). Unlike $F$-models, each nontrivial gauge transformation is a nontrivial isomorphism.

The two approaches can be combined. The electromagnetic field strength $F$ can be reconceptualized as a curvature 2-form $\Omega$ on the $U(1)$-principal bundle $P$ (formally $\Omega=d\omega +\omega\wedge \omega$). Intuitively, if we start at a point on a fiber, and move around in a loop according to the connection, we end up typically at a different point on the fiber from the starting point. When such a loop is `infinitesimal', we call the `angle' between the starting and the end point a `curvature'. A curvature is a 2-form because it takes two tangent vectors, which form an infinitesimal loop, and it determines how an element of a fiber transforms to another.  It is trivial to see that however we twist the circles (like our Slinky) along the loop, the relative positions of the starting and end points on the same circle do not change. So $\Omega$ is indeed invariant under gauge transformations.


Dewar ([2019a]) called the $\omega$-models `sophisticated' and the $F$-models `reduced'.  The reduced models are formulated solely in terms of notions that are invariant under the symmetries of the theory, while  \emph{sophisticated} models are variant but isomorphic under the symmetries. Accordingly, \textit{reduction} is the approach that we should find reduced models to capture the common ontology of symmetry-related models (SRMs) in order to justify the claim that SRMs represent the same physical situation, while \textit{sophistication} is roughly the approach that SRMs \textit{invariably} represent the same physical situation just like isomorphic models. Dewar further suggests \textit{external sophistication}, drawing on Wallace ([2019]), according to which we can declare SRMs as isomorphic `by fiat', or, as others put it, treat them `as if' they were isomorphic (Martens and Read [2021]). In contrast, what Dewar calls \textit{internal sophistication} refers to the alternative strategy of reformulating SRMs into new structures such that they can be shown to be isomorphic. The $\omega$ models are an example of it.

Reduction is often preferred over sophistication in the philosophical literature on symmetries, but Dewar argues in favor of the latter. To argue for this, Dewar points to the conceptual advantages of sophisticated models. For example, certain primitive facts about $F$, such as the axiom of anti-symmetricity  (i.e., $F_{\mu\nu}=-F_{\nu\mu}$) can be explained by  a trivial arithmetic statement about the potential $A_\mu$, namely $\partial_\mu A_\nu-\partial_\nu A_\mu = -(\partial_\nu A_\mu-\partial_\mu A_\nu)$.\footnote{This reasoning also works if $A_\mu$ is replaced by a connection on the principal bundle: we can define $A$ as a potential of the target connection $\omega$ relative to an arbitrary chosen flat connection as a reference. } 

Another reason Dewar mentions is that we might not always be able to find an adequate reduced theory---a set of invariant quantities that completely characterize the relevant physics.  In the case of electromagnetism, there are $\omega$-models that do not correspond to any $F$-models, which I will come back to in Section 3. Thus, if $\omega$-models correctly characterize the relevant physics, $F$-models would fall short of that. This problem, however, also applies to internal sophistication. 

I shall add that the alleged advantage of the reduced approach that it dispenses with the surplus in our representations is not as significant as it sounds.   When gauge symmetries are conceptualized as isomorphisms, there is no principled difference between sophisticated models and reduced ones regarding their surplus structures.  In particular, consider the two models $\langle M,F\rangle$ and $\langle M,F'\rangle$ which are related by a diffeomorphism and are isomorphic, as well as the two models $\langle P, \omega \rangle$ and $\langle P,\omega'\rangle$  related by a gauge transformation. These situations are perfectly analogous to each other regarding surplus structures. In both cases, the same physical situation is represented by a mathematical model up to isomorphism. 

\section{A Categorical Strategy}

We have seen examples for  reformulating symmetry-related models into isomorphic ones. But this has only been done on a case-by-case basis. The systematic strategy of declaring SRMs as isomorphic by fiat, on the other hand, is widely criticized as too opaque and even downright contradictory (see Martens and Read [2021]). In this section, I will present a rigorous category-theoretical strategy for systematically reformulating SRMs as isomorphic.\footnote{Readers who are acquainted with the discussions that appeal to category-theoretical tools (e.g., Weatherall [2017]) will find this proposal familiar. To these readers, I would like to point out what I find to be problematic in the existent presentations of this strategy and what I wish to contribute. First, the technical or formal distinction between different categories of models resulting from reformulating SRMs is usually glossed over. Second, this part of the literature is often tied up with the general attitude that categorical equivalence implies theoretical equivalence, which I wish to reject. Both of them contribute to a somewhat prevailing conflation of distinct approaches to SRMs that I wish to peel apart. I will come back to this later in the section.}

To illustrate Dewar's external sophistication further, we know that for two vector spaces, which are sets equipped with addition and scalar multiplication, a bijective linear transformation is an isomorphism. Suppose we are given algebraic structure $\mathbb{R}^2$ (which has a richer structure than mere vector spaces), and the set of all bijective linear transformations on $\mathbb{R}^2$. Then we can reduce $\mathbb{R}^2$  to a mere vector space by simply stipulating those maps to be isomorphisms (see Wallace [2019], Dewar [2019a]). In other words, we shear away all the structures that such maps do not preserve. 

This, however, is not mathematically rigorous. In this example, the resulting structure is a coarser-grained algebraic structure than what we start with. But in general, there is no guarantee that the mathematical objects resulting from such a by-fiat declaration are algebraic structures or even well-defined set-theoretic structures.\footnote{\label{Alg}By an `algebraic structure' I mean an unstructured set equipped with algebraic operators. Clearly, it is not generally the case that the invariant structure under a set of symmetries is an algebraic structure. For example, a spacetime model in the form of $\langle	M, g\rangle$ is not an algebraic structure, since $g$ is a field on spacetime points, not an algebraic operator mapping between elements of $M$ (and the same applies to the topological and differential structures of $M$). In contrast, groups, rings, algebras, lie algebras are all examples of algebraic structures. For an example that the resulting objects are not set-theoretic structures, the `infinitesimal space' $\Delta$ in the dual category of $C^\infty$-rings is not characterizable by standard set theory (see Moerdijk and Reyes [1981], Bell [2008]).
	
	Of course, the authors in question may have particular structures in mind without aiming at a full generalization. In particular, Wallace ([2019]), which Dewar ([2019a]) draws upon, argues that we can define $\mathcal{G}$-structured space in terms of a collection of bijections to $\mathbb{R}^n$ that can be transformed to each other through group $\mathcal{G}$. The point of this formalism is to argue for a `coordinate-dependent' notion of space that specifies how it behaves under coordinate transformations. But here we are aiming at a generalization of a different kind for a different purpose---$\mathcal{G}$-structured space won't do. 
} In what follows, I shall substantiate this strategy in a more rigorous manner using category-theoretic notions. 

The standard spacetime models in the form of $\langle M, \{\phi\}\rangle $ (where $\{\phi\}$ includes particular configurations of physical fields on manifold $M$, including a metric field).
We start with the category of all such models with diffeomorphisms as morphisms. Then, we add symmetries into the morphisms of the category such that the models related by symmetries are isomorphic and each (nontrivial) symmetry transformation determines a (nontrivial) isomorphism.  For concreteness, I will explain this strategy in detail in the case of classical electromagnetism, which can be generalized to other classical gauge theories.\footnote{\label{gen}
	As I will explain later, the strategy generalizes to other classical gauge theories by replacing the $U(1)$ group used in classical electromagnetism by other symmetry groups. I shall note that this strategy might not be practically called for, because other gauge theories such as Yang-Mills theories are already standardly formulated according to the principal-bundle formalism, in which the symmetries are already isomorphisms. Conceptually, however, it is instructive to note the difference between this strategy and the principal bundle approach, as I will explain later.}
	
	It is almost magical that we can turn non-isomorphic SRMs by simply enriching the category with the symmetries, as we will see in more detail soon. While technically plain, this strikes me as a remarkable trick. We are now licensed to say that, for example,  any gauge-related models are isomorphic in a technically precise sense. To demystify it, however, note that the majority of this trick is simply done by the basic ideas of category theory, namely the role of morphisms, and more specifically how we use them to define isomorphisms. Category theory basically provides a notion of isomorphisms (i.e., invertible morphisms) that unify the familiar notion of isomorphisms and the familiar notion of symmetries. So it is really no surprise that SRMs can be considered isomorphic in this framework; and it is the practical success and utility of category theory that lends justification and plausibility to this  magical trick.

\subsection{Revisiting Electromagnetism}

\begin{proposition}
 Symmetry-related models are isomorphic in the category of models that includes the symmetry transformations as morphisms.
\end{proposition}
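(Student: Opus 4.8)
The plan is to make the informal construction precise as a category and then observe that isomorphism follows almost immediately from the invertibility of the symmetry transformations. First I would fix the objects to be the relevant models for a fixed manifold $M$---say the $A$-models $\langle M, A\rangle$---and specify the morphisms as the diffeomorphisms of $M$ together with the gauge transformations $A \mapsto A + d\phi$. A gauge transformation labelled by a scalar field $\phi$ is to be regarded as a morphism from $\langle M, A\rangle$ to $\langle M, A + d\phi\rangle$, so that (unlike in the $F$-models) distinct objects are genuinely linked by these new arrows. The key point to establish is that this collection of morphisms, once closed under composition, genuinely satisfies the category axioms.

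Second, I would verify those axioms. For composition, the composite of the gauge transformations labelled by $\phi_1$ and $\phi_2$ is the transformation labelled by $\phi_1 + \phi_2$, since $d$ is linear and $(A + d\phi_1) + d\phi_2 = A + d(\phi_1 + \phi_2)$; composites involving diffeomorphisms are handled by the usual pullback functoriality. The identity on $\langle M, A\rangle$ is the transformation labelled by $\phi = 0$ (together with the identity diffeomorphism), and associativity is inherited from the associativity of addition of scalar fields and of composition of diffeomorphisms. This bookkeeping is where essentially all of the technical content lies.

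Third, and this is the heart of the matter, I would observe that each gauge transformation is invertible: the morphism induced by $\phi$ has a two-sided inverse induced by $-\phi$, because the composition law sends $\phi + (-\phi) = 0$ to the identity, and indeed $(A + d\phi) + d(-\phi) = A$. An invertible morphism is by definition an isomorphism, so any two gauge-related models are isomorphic objects of the category. The same argument applies verbatim to the $\omega$-models, and it generalizes to an arbitrary gauge group $\mathcal{G}$ (as anticipated in footnote~\ref{gen}) by replacing the additive group of gauge parameters with $\mathcal{G}$; invertibility of each symmetry arrow then follows directly from the existence of inverses in $\mathcal{G}$.

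I expect the main obstacle to be conceptual rather than computational. The content of the proposition is not a deep calculation but the claim that the enriched collection of arrows is closed under composition and contains inverses, so that it forms a bona fide category---indeed a groupoid over the models---in which symmetries are isomorphisms; once that is secured, \enquote{symmetry-related implies isomorphic} is a tautology. The only genuine subtlety is ensuring closure when diffeomorphisms and gauge transformations are composed together: one must check that conjugating a gauge transformation by a diffeomorphism again yields an admissible morphism, so that the two kinds of arrows interact coherently rather than generating a larger, unintended morphism class.
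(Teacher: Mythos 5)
Your proposal is correct and takes essentially the same route as the paper: enrich the category of models with symmetry-based morphisms (pairs of a diffeomorphism and a gauge parameter), verify the category axioms with composition twisted by pullback along the diffeomorphism part, and exhibit a two-sided inverse for each symmetry arrow, so that symmetry-related models become isomorphic objects. The only differences are cosmetic: you parametrize gauge transformations by additive scalar parameters (as in Weatherall's formulation) and work over a fixed manifold, whereas the paper uses $U(1)$-valued maps $s$ with $A' = A + ds$ (for ease of generalization to other gauge groups) and writes down the inverse $\langle \phi^{-1}, s^{-1}\rangle$ with $s^{-1}(x) = s(\phi^{-1}(x))^{-1}$, which settles the diffeomorphism--gauge interaction you flag as the remaining subtlety.
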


Start with the category \textbf{Pot$_{EM}$} of all $A$-models for classical electromagnetism that consists of models $\langle M, A\rangle$, where $M$ is a smooth manifold, and $A$ is a 1-form on $M$ that represents the electromagnetic potential. Let $A'$ be a 1-form that relates to $A$ by a gauge transformation from the symmetry group $U(1)$. If the category of such models has only diffeomorphisms (or smooth maps, if we want to be more general) between the manifolds as morphisms, as typically is the case, then such models are not isomorphic. To render them isomorphic, we can add new morphisms constructed out of the gauge transformations to the original category. Let each morphism be a pair consisting of a diffeomorphism between the manifolds and a gauge transformation between the potentials. More formally, call the resulting category \textbf{Gau}$_{EM}$, which consists of:

\begin{itemize}
 \item Objects written in the form of $\langle M, A\rangle$;
 
 \item Any morphism $f$ from  $\langle M, A\rangle$ to  $\langle M', A'\rangle$ is written in the form of $\langle \phi:M\to M', s: M\to U(1)\rangle$, where $\phi$ is a diffeomorphism and $s$ takes value in the symmetries that take $A$ to $A'$, namely $A'=A+ds$, such that:\footnote{\label{ss} A further technical point: It is natural to require that if the symmetries $s,s':M\to U(1)$ both take $A$ to $A'$, then they should be identified. Given that $A'=A+ds=A+ds'$, we have $ds-ds'=d(s-s')=0$, which entails that $s-s'=constant\in U(1)$ if the manifold is connected. To identify all such symmetries, we can require the symmetry $s$ that constitutes a morphism to be $s:M\to U(1)$ \emph{modulo $U(1)$}. This way, for any $A,A'$, the uniqueness of their mutual transformations is ensured. }
 
 \begin{itemize}
 	 \item The identity morphism $Id_{\langle M, A\rangle}$ is $\langle Id_M, M\to Id_{U(1)}\rangle$. For brevity, I will just write it as $Id$.
 	 
 	\item Given  $f=\langle \phi:M\to M', s: M\to U(1)\rangle$ and $g=\langle \psi:M'\to N, t: M'\to U(1)\rangle$, we have the composite morphism $g\circ f= \langle \psi\circ\phi:M\to N, (t\circ\phi)s: M\to U(1)\rangle$.\footnote{Note that this strategy is not completely new to philosophers and indeed already presented by Weatherall [2017], where the relevant category is defined very similarly. The difference here is that instead of using a real-valued function $\phi$ (see Section 2.1) in defining the category as in Weatherall [2017], I appeal directly to the symmetry group $U(1)$ of electromagnetism. The purpose of this is for easy generalization to other gauge theories  (and the strategy does not obviously depend on $U(1)$ being abelian). See also Footnote \ref{gen}.}
 \end{itemize}

\end{itemize}

It is clear that in this category, the objects $\langle M, A\rangle$ and $\langle M', A'\rangle$ are isomorphic. To show this, we can show that $f= \langle \phi:M\to M', s: M\to U(1)\rangle$ has an inverse $f^{-1}$, meaning $f\circ f^{-1}=f^{-1}\circ f=Id$. Let $\phi^{-1}$ be the inverse of the diffeomorphism $\phi$. For any $x\in M'$, let $s^{-1}(x)=s(\phi^{-1}(x))^{-1}$. We then construct $f^{-1}$ to be simply $\langle \phi^{-1},s^{-1}\rangle$.  It is straightforward to check that $f^{-1}$ is indeed the inverse of $f$.\footnote{As a useful technical observation, we note that this category is \textit{the category of elements} of $\mathcal{A}$ construed as a functor from the category of manifolds with the same morphisms to \textbf{Set} that maps a spacetime manifold to the set of all possible field configurations (which are connection 1-forms in this case).
	
		Let $\mathcal{F}$ be a functor from a category $\mathcal{C}$ to the category of sets, $\textbf{Set}$. The \emph{category of elements} of $\mathcal{F}$, denoted by $\int \mathcal{F}$, is a category that encodes the elements of the sets that the functor $\mathcal{F}$ maps objects of $\mathcal{C}$ to.	The objects of $\int \mathcal{F}$ are pairs $(c, x)$, where $c$ is an object of $\mathcal{C}$, and $x \in \mathcal{F}(c)$ is an element of the set that the functor $\mathcal{F}$ maps $c$ to.	A morphism in $\int \mathcal{F}$ between two objects $(c, x)$ and $(d, y)$ is a morphism $f: c \to d$ in $\mathcal{C}$ such that $\mathcal{F}(f)(x) = y$. In other words, $f$ is a morphism in $\mathcal{C}$ that is compatible with the action of the functor $\mathcal{F}$ on the elements $x$ and $y$.
		
		 The functoriality of the field functor guarantees that it is invariant under the given symmetries, and the resulting category of elements are isomorphic under the given symmetries. What we have shown is an instance of this more general observation. }

Let's turn to the comparison of  \textbf{Gau$_{EM}$} with other categories of electromagnetic models. First, consider the reduced $F$-models. Formally, let's define a category of models \textbf{Frc$_{EM}$} as consisting of $F$-models in the form of $\langle M, F\rangle$ as objects, and diffeomorphisms as morphisms. 

\begin{proposition}
\textbf{Gau$_{EM}$}  is equivalent to \textbf{Frc$_{EM}$} if spacetime has a trivial topology.\footnote{This is very close to Proposition 5.5 in Weatherall [2016].}
\end{proposition}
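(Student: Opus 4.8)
The plan is to exhibit an explicit functor $\mathcal{E}\colon \textbf{Gau}_{EM}\to\textbf{Frc}_{EM}$ and show it is full, faithful, and essentially surjective, which by the standard characterization of equivalences suffices. The natural candidate sends an object $\langle M, A\rangle$ to $\langle M, dA\rangle$---that is, it applies the field-strength construction $F=dA$---and sends a morphism $\langle \phi, s\rangle$ to its underlying diffeomorphism $\phi$. First I would check that this assignment is well-defined on morphisms and functorial. Well-definedness amounts to verifying that if $A'=\phi_* A+ds$ then $\phi_*(dA)=dA'$; this is immediate from the facts that pushforward commutes with the exterior derivative and that $d(ds)=0$ by nilpotency, exactly the computation already used in Section 2.1 to show $F$ is gauge-invariant. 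Preservation of identities and of composites then follows directly from the composition law $g\circ f=\langle \psi\circ\phi,(t\circ\phi)s\rangle$, since on underlying diffeomorphisms this is just $\psi\circ\phi$, and $\mathcal{E}(\langle M,dA\rangle)$ does land in $\textbf{Frc}_{EM}$ because $d(dA)=0$.

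Next I would establish faithfulness. Fixing objects $\langle M,A\rangle$ and $\langle M',A'\rangle$, suppose $\langle \phi,s\rangle$ and $\langle \phi,s'\rangle$ are morphisms between them with the same underlying diffeomorphism $\phi$. Then $\phi_* A+ds=A'=\phi_* A+ds'$, so $d(s-s')=0$; since trivial topology makes $M$ connected, $s-s'$ is a constant in $U(1)$, and the modulo-$U(1)$ convention on morphisms identifies the two. Hence the induced map on each hom-set is injective. Fullness is where the topology does real work: given a morphism $\phi\colon \langle M,dA\rangle\to\langle M',dA'\rangle$ of $\textbf{Frc}_{EM}$, i.e.\ $\phi_*(dA)=dA'$, the $1$-form $A'-\phi_* A$ is closed, and when $H^1_{\mathrm{dR}}(M')=0$ it is therefore exact, say $A'-\phi_* A=ds$. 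This exhibits $\langle \phi,s\rangle$ as a morphism of $\textbf{Gau}_{EM}$ lying over $\phi$, so $\mathcal{E}$ is full.

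Finally, essential surjectivity requires that every field-strength object arise, up to isomorphism, as some $\langle M,dA\rangle$. Here I would make explicit that the objects of $\textbf{Frc}_{EM}$ are the physically admissible field strengths, i.e.\ closed $2$-forms satisfying the homogeneous Maxwell equation $dF=0$ (the Bianchi identity automatically satisfied by any $F=dA$). Given such an $F$ on a manifold $M$ with trivial topology, the vanishing of $H^2_{\mathrm{dR}}(M)$ yields a potential $A$ with $dA=F$, so in fact $\mathcal{E}(\langle M,A\rangle)=\langle M,F\rangle$ on the nose, giving even surjectivity on objects. Assembling the three properties delivers the equivalence.

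The main obstacle is conceptual rather than computational: isolating precisely which topological hypothesis is needed and where. Fullness consumes the vanishing of the first de Rham cohomology (closed $1$-forms are exact), while essential surjectivity consumes the vanishing of the second (closed $2$-forms are exact); \emph{trivial topology} must be read strongly enough---e.g.\ $M$ contractible, as for $\mathbb{R}^4$---to supply both, and without it the statement genuinely fails. This is the formal counterpart of the remark flagged earlier that there are $\omega$-models (equivalently, gauge-inequivalent potentials) with no corresponding $F$-model once the topology is nontrivial, which is exactly an $H^1$ obstruction. I would also be careful to keep the modulo-$U(1)$ identification in play throughout, since it is what makes faithfulness hold on the nose rather than merely up to an additive constant.
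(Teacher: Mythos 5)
Your proposal is correct and follows essentially the same route as the paper: the same functor $\langle M,A\rangle \mapsto \langle M, dA\rangle$ (forgetting the $U(1)$-valued component of morphisms), with faithfulness from connectedness plus the modulo-$U(1)$ convention, fullness from vanishing first de Rham cohomology, and essential surjectivity from vanishing second cohomology via the Bianchi identity. In fact, your write-up makes explicit the cohomological bookkeeping that the paper defers to its subsequent proposition, but the underlying argument is identical.
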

 To prove, we need to show that there is a functor from one to the other that is fully faithful and essentially surjective.\footnote{Briefly, a functor is (1) faithful if it preserves the distinctness of morphisms; (2) full if it maps to every morphism in the target category; (3) essentially surjective if every object in the target category is isomorphic to the image of some object in the source category. } Here's how to construct such a functor. Let a functor $\Psi: \textbf{Gau$_{EM}$}\to \textbf{Frc$_{EM}$}$ map each $\langle M, A\rangle$ to $\langle M, dA\rangle$ and map each morphism simply to its first element. $\Psi$ is indeed functorial since---first of all---the gauge transformation from $A$ to $A'=A+ds$ would leave $F$ invariant ($dA=dA'$ because $d$ is nilpotent), and moreover, a diffeomorphism pushes $F$ in the same way as it pushes $A$ (that is, $\phi^*F=d(\phi^*A)$, where $\phi^*$ is the pullback of a diffeomorphism  $\phi$ of interest; in particular, if $\phi^*A=A$ then $\phi^*F=F$). 

Suppose $f\neq g\in Hom(X,Y)$ for $X,Y\in \textbf{Gau$_{EM}$}$. Suppose for reductio that  $\Psi(f)=\Psi(g)$. If $f,g$ include different diffeomorphisms, then it immediately follows that $\Psi(f)\neq\Psi(g)$. Thus, they have the same diffeomorphism. But in this case, we have also ensured that the symmetry is unique if the manifold in question is connected (Footnote \ref{ss}), and therefore $f=g$, which contradicts the assumption.  Therefore, $f\neq g$ implies $\Psi(f)\neq\Psi(g)$, which means that $\Psi$ is faithful, provided that all manifolds in the categories are connected. It is easy to confirm that the functor is full and essentially surjective if spacetime is topologically trivial, since every curvature $F$ can be obtained from a connection $A$ by $F=dA$. Thus the categories are equivalent under the assumption that our spacetime is simply connected. 

It is worth noting that \textbf{Gau$_{EM}$} and \textbf{Frc$_{EM}$} are not isomorphic categories even if the topology is trivial, which is a stricter condition than categorical equivalence. Without getting into the formal details, we note that the functor does not preserve \emph{all} details of \textbf{Gau$_{EM}$}. For example, consider  $f= \langle \phi:M\to M, s: M\to U(1)\rangle$ that maps $\langle M, A\rangle$ to $\langle M, A'\rangle$ and $h= \langle \phi:M\to M, t: M\to U(1)\rangle$ that maps $\langle M, A\rangle$ to $\langle M, A''\rangle$ with $s\neq t$. Clearly, all the objects are mapped to $\langle M, F\rangle$ with $F=dA$, and all the morphisms are mapped to  $\phi$ despite $s\neq t$. Furthermore, let $\overline{\textbf{Pot}}_{EM}$ be the category of models in the form of $\langle M, [A]\rangle$ where $[A]$ is the equivalence class of potentials related by gauge transformation, with diffeomorphisms as morphisms. Similarly, we have that  \textbf{Gau$_{EM}$}  is equivalent but not isomorphic to $\overline{\textbf{Pot}}_{EM}$.

But what if our spacetime topology is not trivial? We can in fact prove the following elegant technical result:

\begin{proposition}
 The functor $\Psi$ from \textbf{Gau$_{EM}$} to \textbf{Frc$_{EM}$}  is faithful iff the 0-cohomology group is trivial, full iff the 1-cohomology group is trivial, and essentially surjective iff the 2-cohomology group is trivial.\footnote{For brevity, I simply refer to reduced re Rham cohomology groups by `cohomology group'. 	
		See Footnote \ref{derham}.} 
\end{proposition}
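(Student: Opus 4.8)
The plan is to handle the three functorial properties one at a time, reducing each to a solvability question for differential forms on the spacetime $M$ and then reading off the obstruction as a de Rham class. I will use throughout that $\Psi$ sends $\langle M, A\rangle$ to $\langle M, dA\rangle$ and a morphism to its diffeomorphism component $\phi$, and I adopt the pullback convention that $\langle \phi, s\rangle : \langle M, A\rangle \to \langle M', A'\rangle$ encodes the relation $\phi^* A' = A + ds$ (compatible with the computation $\phi^* F = d(\phi^* A)$ used in the proof of the preceding equivalence of categories). Since $\phi$ is a diffeomorphism it induces isomorphisms on all cohomology groups and its pullback preserves both closedness and exactness. I also read \textbf{Frc$_{EM}$} as consisting of \emph{closed} $2$-forms (the homogeneous Maxwell law $dF = 0$), because a non-closed $F$ can never be isomorphic to an object $\langle M', dA\rangle$ of the essential image, whose form is exact and hence closed.

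For faithfulness, I would ask when $\Psi$ identifies two distinct arrows. Morphisms $\langle \phi, s\rangle$ and $\langle \phi, t\rangle$ with a common source, target and diffeomorphism both map to $\phi$, and both satisfy $\phi^* A' = A + ds = A + dt$, so $d(s - t) = 0$: the difference is a closed $0$-form. By the $U(1)$-identification of Footnote \ref{ss} the two morphisms agree exactly when $s - t$ is constant, so $\Psi$ fails to be injective on some hom-set precisely when $M$ admits a non-constant closed $0$-form, i.e.\ when $M$ is disconnected, i.e.\ when $\tilde H^0(M) \neq 0$. For fullness, I take a \textbf{Frc$_{EM}$}-morphism $\phi$ between the images $\langle M, dA\rangle$ and $\langle M', dA'\rangle$, i.e.\ a diffeomorphism with $\phi^*(dA') = dA$, and try to lift it. The lifting condition is $ds = \phi^* A' - A$, whose right-hand side is closed because $d(\phi^* A' - A) = \phi^*(dA') - dA = 0$; a solution $s$ exists iff this $1$-form is exact, i.e.\ iff its class in $H^1(M)$ vanishes. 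Thus $H^1(M) = 0$ forces fullness, and conversely if $H^1(M) \neq 0$ I would produce a counterexample by taking $M' = M$, $\phi = \mathrm{id}$ and $A' = A + \beta$ with $\beta$ closed but not exact: then $\mathrm{id}$ is a genuine \textbf{Frc$_{EM}$}-morphism (it fixes $F = dA = dA'$) but no $s$ solves $ds = \beta$, so it escapes the image of $\Psi$.

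For essential surjectivity, I would characterize the essential image directly: $\langle M, F\rangle$ is isomorphic to some $\langle M', dA\rangle$ iff there is a diffeomorphism $\psi$ with $\psi^*(dA) = F$; since $\psi^*(dA) = d(\psi^* A)$ is exact, this forces $F$ to be exact, while conversely $F = dA$ gives $\langle M, F\rangle = \Psi\langle M, A\rangle$. Hence the essential image is exactly the exact $2$-forms, and a closed $F$ belongs to it iff $[F] = 0$ in $H^2(M)$; so $\Psi$ is essentially surjective iff every closed $2$-form on $M$ is exact, i.e.\ iff $H^2(M) = 0$. Assembling the three equivalences gives the proposition.

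The routine differential-form bookkeeping is straightforward; the genuinely delicate point is the modelling of the gauge parameter $s$, which is what makes the $H^1$ clause literally true or false. Read strictly, a $U(1)$-valued $s$ contributes the gauge $1$-forms $s^{-1}ds$, which are the closed forms with \emph{integral} periods (the large gauge transformations of non-trivial winding), so the obstruction to fullness would really live in $H^1(M;\mathbb{R})/H^1(M;\mathbb{Z})$ rather than in $H^1(M;\mathbb{R})$. To obtain the clean statement ``full iff $H^1$ trivial'' one must take the gauge parameter to be $\mathbb{R}$-valued, so that $ds$ ranges over \emph{all} exact $1$-forms (as in Weatherall [2016]); I would state this convention explicitly. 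I would also make clear that the three clauses are read over a fixed connected spacetime $M$, with the word \emph{reduced} doing work only in degree $0$ --- where $\tilde H^0(M) = 0$ is equivalent to connectedness and hence to faithfulness --- since reduced and ordinary de Rham cohomology coincide in degrees $1$ and $2$.
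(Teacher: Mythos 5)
Your proof is correct and follows essentially the same route as the paper's: each of faithfulness, fullness, and essential surjectivity is reduced to an exactness question for forms of the corresponding degree (locally constant differences of gauge parameters, the lifting equation $ds = \phi^*A' - A$, and Bianchi-closed field strengths, respectively), with the failure directions witnessed by counterexamples over the identity diffeomorphism, exactly as in the paper's discussion. Your added refinements---restricting \textbf{Frc$_{EM}$} to closed $2$-forms, and the observation that genuinely $U(1)$-valued gauge parameters shift the fullness obstruction to $H^1(M;\mathbb{R})/H^1(M;\mathbb{Z})$ (which is nonetheless trivial exactly when $H^1(M;\mathbb{R})$ is)---are points the paper glosses over, not departures from its argument.
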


A differential $n$-form $\omega$ is \emph{closed} if  $d\omega=0$, and is \emph{exact} if there is a ($n$-1)-form $\alpha$ such that $d\alpha=\omega$. The $n$-th de Rham cohomology group can be expressed as the quotient of the set of all closed $n$-forms modulo the set of all exact $n$-forms.\footnote{\label{derham} More formally, for a smooth manifold $M$, the de Rham cohomology groups $H^k_{\mathrm{dR}}(M)$ are defined as the quotient of the closed $k$-forms by the exact $k$-forms:
	\[
	H^k_{\mathrm{dR}}(M) = \frac{\{\omega \in \Omega^k(M) \mid \mathrm{d}\omega = 0\}}{\{\eta \in \Omega^k(M) \mid \eta = \mathrm{d}\alpha \text{ for some } \alpha \in \Omega^{k-1}(M)\}},
	\]
	where $\Omega^k(M)$ denotes the space of smooth $k$-forms on $M$, and $\mathrm{d}$ is the exterior derivative. The reduced de Rham cohomology groups $\tilde{H}^k_{\mathrm{dR}}(M)$ are defined as follows:
	\[
	\tilde{H}^k_{\mathrm{dR}}(M) =
	\begin{cases}
		H^k_{\mathrm{dR}}(M) & \text{if } k > 0, \\
		H^0_{\mathrm{dR}}(M) / \mathbb{R} & \text{if } k = 0,
	\end{cases}
	\]} In other words, cohomology groups are measures of how much closed forms fail to be exact. In the case of 0-cohomology group, the set of all exact 0-forms (namely scalar functions) is trivial, since 0-forms are of the lowest degree. Thus, the group is equal to the set of all closed 0-forms, which is equal to scalar functions with constant values on connected submanifolds. That is, it measures how many disconnected parts our spacetime has. As mentioned earlier, if our spacetime is connected, that the cohomology group is trivial and the functor $\Psi$ from \textbf{Gau$_{EM}$} to \textbf{Frc$_{EM}$} is faithful.\footnote{Note that the unreduced 0-cohomology group is $\mathbb{R}$ rather than $\{0\}$. See Footnote \ref{derham}.} 

A nontrivial 1-cohomology group implies the existence of closed, non-exact 1-forms on the manifold that do not vanish when integrated along a closed loop. In other words, there are non-contractible 1-dimensional loops in the spacetime. For example, if spacetime were a 2-dimensional torus, then the cohomology group would be generated by two non-contractible loops (one along the `short' circle and one along the `long' circle in the intuitive picture of a donut).\footnote{The 2-dimensional torus $T^2$ is the Cartesian product of two circles: $T^2 = S^1 \times S^1$.		The de Rham 1-cohomology groups of a 2d torus $T^2$ 		$H^1_{\mathrm{dR}}(T^2) = \mathbb{R}^2$.
	The calculation  is based on the Künneth formula, which computes the cohomology of the product of two spaces from the cohomology of the individual spaces.} If 1-cohomology group is trivial---if spacetime has no penetrating holes---then the functor from \textbf{Gau$_{EM}$} to \textbf{Frc$_{EM}$}  is full because for every diffeomorphism $\phi$ that takes $F=dA$ to $F'=dA'$, there is an $s$ with $A'=\phi^*A+d\phi^* s$. To illustrate: in the simplest case that $\phi=id$, we have $F=F'$, which implies that $A-A'$ is exact given the trivial 1-cohomology group, hence the existence of a desired gauge transformation. On the other hand, this implication does not go through when the group is nontrivial like in the case of the torus. One can check that $A-A'$ is typically not exact with $A,A'$ being distinct, nontrivial derivatives of the torus coordinates. 

Similarly a nontrivial 2-cohomology group implies the existence of closed, non-exact 2-forms on the manifold that do not vanish when integrated over a 2d surface. This means that there are non-contractible 2-spheres in the spacetime. For example, $\mathbb{R}^3-\{0\}$ would be such a (coordinate) space. In this case, $\Psi$ is not essentially surjective because there are curvature-forms $F$ that are not $dA$ globally for any $A$ albeit locally so. On the other hand, if the cohomology group is trivial, since every $F$ is a closed form (because F satisfies the Bianchi identity $dF=0$), every $F$ is also an exact form. Thus every object in \textbf{Frc$_{EM}$} is reachable by $\Psi$, which means that $\Psi$ is essentially surjective. 

In summary, if our spacetime is topologically trivial, then the two categories are equivalent since there is a fully faithful and essentially surjective functor between the two. If it is not, then the two categories are not equivalent. Should we be worried in the latter case? Is it a problem for our method for constructing \textbf{Gau$_{EM}$}  if the two categories are not equivalent? No. On the contrary, this inequivalence draws out the conceptual substantiveness of this method as distinct from simply appealing to the invariance under the symmetries in question.

But for all we know, our spacetime \emph{is} topologically trivial. So what exactly is the significance of this possibly counterfactual inequivalence for conceptualizing about our world? The answer is that the dynamical laws of our world, for all we know, exhibit nontrivial behaviors of gauge potentials like $A$ which can be \textit{effectively} described by nontrivial cohomology groups. In other words, the gauge degree of freedom enters into dynamical laws and has empirical effects as if a non-trivial topology (or cohomology groups) were present. For example, the Aharonov-Bohm effect, in which the gauge degree of freedom affects the phase of a particle, can be effectively captured by positing a hole in space at the region where $F$ is nonzero (but zero elsewhere).\footnote{	To explain: consider a double-slit experiment with electrons, in which a solenoid producing a magnetic field is placed between the two slits. The magnetic field is confined to the interior of the solenoid, so the electrons only experience the potential $A$ outside the solenoid. 			The \emph{Aharonov-Bohm effect} predicts that the interference pattern of the electrons will be shifted due to the presence of the vector potential, even though the magnetic field is zero outside the solenoid. This shift can be quantified by the phase difference accumulated by the electrons traveling through the two paths (see Appendix for the definition of the integration):
	
	\begin{equation}
		\Delta \phi = \frac{e}{\hbar} \oint_C A,
	\end{equation}
	where $e$ is the electron charge, $\hbar$ is the reduced Planck constant, and $C$ is a closed contour around the solenoid. Here, $A$ cannot be gauged away because of a nonzero curvature $F$ at the solenoid as if there were a hole there.
	
	Of course, I do not claim that a nontrivial cohomology group captures the physical reality underlying the Aharonov-Bohm effect, i.e., that there really is a hole, nor that the cohomology interpretation is equivalent to other interpretations. For example, we could posit non-local interactions between the external particle in question and the curvature field $F$ to account for the effect (see Healey [2007]). Nevertheless, cohomology group is an elegant mathematical tool that allows us to discuss the relevant formal features of gauge fields without involving complex dynamical laws.}

Here, we can see the feature of our method of reformulating symmetry-related models as isomorphic by adding symmetry-based morphisms into the category. On the one hand, this allows us to get rid of the gauge degree of freedom in the sense that in the category \textbf{Gau$_{EM}$}, we wouldn't be able to distinguish between isomorphic models $\langle M, A\rangle$ and $\langle M', A'\rangle$ related by a diffeomorphism and a gauge transformation.\footnote{We identify objects up to isomorphism in category theory. See Section 5 on Univalent Foundations for related discussion.}  On the other hand, the category is inequivalent to the one based on standard invariant quantities under the symmetries and can account for the empirical consequences of the gauge fields like Aharonov-Bohm effect. Thus we do not lose the theoretical merits of gauge theories (see Nguyen et al [2020]).\footnote{\label{Tong}  In addition, as Tong [2018] points out, presenting a theory using gauge fields like $A_\mu$ is often much more concise than using gauge-invariant quantities such as Wilson lines. Also, appealing to the gauge fields allows us to preserve other theoretical virtues like Lorentz invariance, locality and unitarity. Tong also points out that, at the same time, we must identify gauge fields related by gauge transformations. They are dynamically equivalent, and also necessary for removing pathologies in quantum theory (such as the negative norms in the Hilbert space). } 

It remains to be seen how \textbf{Gau$_{EM}$} compares with the $PB$-models, especially the connection $\omega$-models. Let \textbf{Con$_{EM}$} be the category of $PB$-models in the form of $\langle P, \omega\rangle $, where $P$ is a principal bundle over a smooth manifold and $\omega$ is a bundle connection, with principal bundle isomorphisms as morphisms.  I am content with making the following claim, although we can make more informative claims like the case of $M$-models:

\begin{proposition}
	
\textbf{Gau$_{EM}$} is equivalent with \textbf{Con$_{EM}$} if and only if all principal bundles in \textbf{Con$_{EM}$} have a trivial topology. 
\end{proposition}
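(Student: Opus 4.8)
The plan is to compare the two categories through the canonical comparison functor $\mathcal{G}\colon \textbf{Gau}_{EM}\to\textbf{Con}_{EM}$ that sends a potential model to the trivial bundle carrying the corresponding connection. On objects, $\mathcal{G}$ sends $\langle M, A\rangle$ to $\langle M\times U(1), \omega_A\rangle$, where $\omega_A=\theta+\pi^*A$ is the unique connection on the trivial $U(1)$-bundle whose pullback along the canonical global section is $A$ (here $\theta$ is the Maurer--Cartan form and $\pi$ the projection). On morphisms, $\mathcal{G}$ sends $\langle \phi, s\rangle$ to the connection-preserving bundle isomorphism $(x,g)\mapsto(\phi(x), s(x)g)$ covering $\phi$; one checks that this respects identities and the composition law stipulated for $\textbf{Gau}_{EM}$, so $\mathcal{G}$ is a functor. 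The proposition then reduces to two questions about $\mathcal{G}$: whether it is fully faithful, and for which $\textbf{Con}_{EM}$ it is essentially surjective.

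First I would verify full faithfulness. The underlying dictionary is that, on a trivial bundle, connections modulo gauge correspond exactly to global $1$-forms modulo exact $1$-forms, and that a connection-preserving bundle isomorphism covering $\phi$ is precisely the data of a gauge transformation $s$ with $\phi^*A' = A + ds$. Here I would flag one bookkeeping subtlety that must be settled for the hom-sets to match: such an isomorphism determines $s$ only up to a global constant in $U(1)$ (the constant gauge transformations, which preserve every connection since $U(1)$ is abelian), whereas $\textbf{Gau}_{EM}$ already identifies $s$ modulo $U(1)$ by the convention of Footnote \ref{ss}. To make $\mathcal{G}$ fully faithful one must impose the matching convention on the $\textbf{Con}_{EM}$ side, quotienting the connection-preserving isomorphisms by these constant gauge transformations; with this convention $\mathcal{G}$ is fully faithful over any base, independently of topology.

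The conceptual core is essential surjectivity, and this is where the topological hypothesis enters. The claim I would prove is that $\langle P, \omega\rangle$ lies in the essential image of $\mathcal{G}$ if and only if $P$ is trivial. One direction is immediate: $\mathcal{G}$ produces only trivial bundles, and triviality is invariant under connection-preserving bundle isomorphism, so every object isomorphic to one in the image has trivial bundle. For the converse I would use the standard fact that a $U(1)$-principal bundle is trivial if and only if it admits a global section; choosing such a section and pulling $\omega$ back along it yields a global potential $A$ with $\langle P, \omega\rangle\cong\mathcal{G}\langle M, A\rangle$. Hence $\mathcal{G}$ is an equivalence onto the full subcategory $\mathcal{T}\subseteq\textbf{Con}_{EM}$ of trivial-bundle objects, and $\mathcal{T}$ is replete since triviality is an isomorphism invariant.

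Finally I would assemble the biconditional. If every bundle in $\textbf{Con}_{EM}$ is trivial, then $\mathcal{T}=\textbf{Con}_{EM}$ up to isomorphism, so $\mathcal{G}$ is itself essentially surjective and hence an equivalence. Conversely, if some bundle is nontrivial, its object is isomorphic to nothing in the essential image, so $\mathcal{G}$ fails to be essentially surjective; since every object of $\textbf{Gau}_{EM}$ carries a globally defined potential and hence realizes only trivial-bundle isomorphism classes, a nontrivial-bundle class has no $\textbf{Gau}_{EM}$ counterpart. I expect this last step---upgrading the failure of the \emph{canonical} $\mathcal{G}$ to the nonexistence of \emph{any} equivalence---to be the main obstacle, since an abstract equivalence need not be $\mathcal{G}$. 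The way to close it is to note that bundle triviality is an isomorphism-invariant property of objects, so any equivalence would have to match the (trivial-bundle-only) isomorphism classes of $\textbf{Gau}_{EM}$ with all isomorphism classes of $\textbf{Con}_{EM}$, which is impossible exactly when a nontrivial bundle is present. Proving the global-section/triviality criterion cleanly and handling the constant-gauge matching are the two technical points I would be most careful about.
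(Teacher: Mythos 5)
Your construction is, in substance, the paper's own proof: the paper likewise compares the two categories via the canonical functor sending $\langle M, A\rangle$ to the trivial bundle equipped with the connection determined by $A$, establishes the dictionary between pairs $\langle \phi, s\rangle$ and connection-preserving bundle isomorphisms, and locates the failure of equivalence in (essential) surjectivity onto nontrivial bundles. You are in fact more careful than the paper on one point: the mismatch between morphisms of \textbf{Gau$_{EM}$} (where $s$ is taken modulo constant elements of $U(1)$) and connection-preserving bundle isomorphisms (where the constants act freely and preserve every connection) is real, and the paper's claim that a diffeomorphism and a $U(1)$-valued map ``uniquely determine a bundle morphism, and vice versa'' glosses over it; your proposal to quotient the \textbf{Con$_{EM}$} hom-sets by constant gauge transformations is a legitimate fix.

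The genuine gap is the step you yourself flag as the main obstacle, and your proposed closure does not work. From ``bundle triviality is an isomorphism-invariant property of objects of \textbf{Con$_{EM}$}'' it does not follow that an arbitrary equivalence $F:\textbf{Gau}_{EM}\to\textbf{Con}_{EM}$ hits only trivial-bundle classes. Your statement that ``every object of \textbf{Gau$_{EM}$} carries a globally defined potential and hence realizes only trivial-bundle isomorphism classes'' is a statement about the canonical functor $\mathcal{G}$, not about $F$: objects of \textbf{Gau$_{EM}$} are not bundles, and nothing forces $F\langle M, A\rangle$ to be isomorphic to a trivial-bundle object---triviality is a property of the internal presentation of objects of \textbf{Con$_{EM}$}, not a categorical invariant that an abstract equivalence must respect. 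To rule out every equivalence you would need an invariant-level argument: both categories are groupoids, so one must show there is no bijection of isomorphism classes compatible with automorphism groups, and this is delicate because the extra nontrivial-bundle classes could in principle be absorbed by an infinite matching. For what it is worth, the paper does not close this gap either: its ``only if'' direction argues, informally, only that the canonical functor fails to be full or essentially surjective, i.e.\ it implicitly reads the proposition as a claim about that functor. Read that way, your proof is complete and slightly more rigorous than the paper's; read literally as the nonexistence of any equivalence, both arguments fall short, and your final paragraph should be regarded as a non-sequitur rather than a proof.
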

By a trivial topology of a principal bundle, I mean that the principal bundle is homeomorphic to a simple product space of the base space $M$ and the symmetry group $U(1)$, where $M$ may or may not have a trivial topology. To sketch the proof, regarding the `if' part, a principal bundle is equipped with a projection map $\pi$ from the total space $P$ to the base space $M$ (see Appendix A). Given a 1-form $A$ on $M$, we can uniquely determine a 1-form $\omega$ on $P$ (with a trivial topology)  by letting $\omega (v)$ for all $v$ in a tangent space on $P$ equal $A(d\pi (v))$ where the function $d\pi$ amounts to extracting the `$M$-part' of $v$. Given a 1-form $\omega$ on $P$, we can determine a 1-form $A$ on $M$ by choosing a flat-connection path $p$ in $P$ (recall Section 2.1) and projecting down the values of the connection $\omega$ on the path ($A=\pi^*\omega_p$). This is possible only because for any $U(1)$ action on the path, we also have the corresponding symmetry morphism from $A$, and vice versa. That is, a diffeomorphism between base manifolds and a $U(1)$-valued map $s$ uniquely determine a bundle morphism, and vice versa. The functor being described is indeed fully faithful and essentially surjective.

 The equivalence breaks down if \textbf{Con$_{EM}$} admits $P$ with a nontrivial topology. The topology of $P$ depends on how the fibers are 	`glued together' over $M$. For example if the base manifold is a 2d torus, there are countably infinitely many isomorphic classes of $U(1)$-principal bundles on it.\footnote{The classification of principal U(1) bundles over a base manifold $M$ is determined by the second cohomology group with integer coefficients, $H^2(M, \mathbb{Z})$. Each element of $H^2(M, \mathbb{Z})$ corresponds to an isomorphism class of principal U(1) bundles over $M$. In particular, if $H^2(M, \mathbb{Z})$ is nontrivial, there exist topologically distinct U(1) bundles over $M$. For example, consider a two-dimensional torus $T^2$ as the base manifold. 	The cohomology group $H^2(T^2, \mathbb{Z})$ is isomorphic to $\mathbb{Z}$, which means there is one independent generator that corresponds to a nontrivial U(1) bundle over $T^2$. This generator is associated with the integer-valued winding number that characterizes the `gluing' of the U(1) fibers specified by the transition functions on the torus.} For connections on bundles in different classes, there are no U(1)-actions that relate them because a U(1)-transformation is homotopical to a trivial transformation. In particular, imagine that our Slinky is twisted one time around the circle, with two ends glued together. However much we twist this twisted Slinky, we cannot untwist it (without tearing it apart and re-gluing it). In this case, it is easy to see (at least informally) that the functor from \textbf{Gau$_{EM}$} to   \textbf{Con$_{EM}$}  cannot be full or essentially surjective, since the former does not include additional U(1)-classes on $P$ with non-trivial topologies (in other words, nontrivial and trivial holonomies are projected into the same connections on $M$).   

Since they are not equivalent in the case of nontrivial principal bundles, which category of models we should use for electromagnetism depends on whether the nontrivial topology is useful for the theory (which is indeed the case, at least when we generalize to Yang-Mills theories; see Healey [2007], \textsection 3.1). 

\subsection{Interpreting Categorical Equivalence}

Let us summarize the results so far. Besides the original category \textbf{Pot}, we have considered four categories \textbf{Con}$_{EM}$, \textbf{Frc}$_{EM}$, \textbf{Gau$_{EM}$} and  $\overline{\textbf{Pot}}_{EM}$.  Among them,   \textbf{Frc}$_{EM}$ and \textbf{Gau$_{EM}$} are  equivalent just in case spacetime is topologically trivial, and the same for \textbf{Con}$_{EM}$ and \textbf{Gau$_{EM}$}  just in case the principal bundle is topologically trivial. \textbf{Gau$_{EM}$} and  $\overline{\textbf{Pot}}_{EM}$ are  equivalent, though not isomorphic, while \textbf{Frc}$_{EM}$ and $\overline{\textbf{Pot}}_{EM}$ are isomorphic when spacetime is topologically trivial. (Of course, we can say many other true things about every pair of these categories, but this will suffice for now.)

Do two isomorphic categories of models represent equivalent theories? For example, assuming that spacetime is topologically trivial, do \textbf{Frc}$_{EM}$ and $\overline{\textbf{Pot}}_{EM}$ count as equivalent representations of reality? While a thorough discussion of this question require a separate treatise, I would like to briefly explain why I think the answer is no. According to the semantic or model-theoretic approach, a scientific theory is characterized by a class of models, or in our categorical framework, a category of models, one of which corresponds to reality according to the theory. Now, an \textit{interpreted} theory under the syntactic approach is one that offers the ontological and ideological interpretations of its fundamental principles and laws (e.g.,  the Maxwell equations in classical electromagnetism) such that we know what fundamental entities and relations the theory postulates.  Similarly, an interpreted theory under the model-theoretic approach is one represented by a class of \textit{interpreted} models, or in our framework, an interpreted category of models, which tell us \textit{how} they correspond to reality. This  at least has initial plausibility: many people prefer models in the form of $\langle M, F\rangle$ over $\langle M, [A]\rangle$ (all else being equal) because the latter `consists of' an equivalence class of a potential fields and is less ontologically perspicuous than the former (see, for example, Jacobs [2022]). By endorsing $\langle M, [A]\rangle$ literally and  realistically, we are committed to what the model consists of, including the existence of potential fields.  Similarly, when it comes to categories of models, it matters whether we interpret a category as consisting of $F$-models or $[A]$-models even though the the two kinds of interpretations are equally legitimate and behave the same when spacetime is simple---after all, an uninterpreted abstract category consists of only a bunch of arrows.  By endorsing a category of models realistically, we are commited to the entities and relations that we use in interpreting the category. (These interpretations are implicit in our discussion so far, since they are built into our definitions of the categories.) Thus, I do not consider categorical equivalence or even isomorphism as a sufficient condition for theoretical equivalence, in the fully interpreted sense (see Wallace [2022]; see also Lutz [2010], Halvorson [2012]).

While this is by no means a thorough discussion, we have at least something to work with in explaining the difference between $[A]$-models in $\overline{\textbf{Pot}}_{EM}$ and the $A$-models in \textbf{Gau$_{EM}$}. Some authors have interpreted sophistication as the quotient view, which involves $[A]$-models, and object to sophistication on this ground (see, again, Martens and Read [2021] and Jacobs [2022]). I will come back to this objection soon, but for now I shall note that  the quotient view is neither reduction nor sophistication as I construe it. It is not reduction because the $[A]$-models are not formulated solely in notions that are invariant under gauge transformations. It is not sophistication in the sense that the (nontrivial) symmetries are not construed as (nontrivial) isomorphisms between models---all SRMs are collapsed into a single model. Now, one may ask: given the categorical equivalence between  \textbf{Gau$_{EM}$} and  $\overline{\textbf{Pot}}_{EM}$, don't they represent the same theory? The above discussion suggests no. In defining  \textbf{Gau$_{EM}$}, we appeal to the notion of symmetries instead of that of equivalence classes. I think we should prefer \textbf{Gau$_{EM}$} to $\overline{\textbf{Pot}}_{EM}$ because the former is more physically natural in the sense that it only appeals to notions that are of great utility to physicists rather than (say) artificial or idle concepts (see, for example, Footnote \ref{Tong}).

While I do not take categorical equivalence as a sufficient condition for theoretical equivalence, it is a necessary one. The theories represented by \textbf{Con}$_{EM}$, \textbf{Frc}$_{EM}$, \textbf{Gau$_{EM}$}  are thus all different. \textbf{Frc}$_{EM}$ is less expressively rich than \textbf{Gau$_{EM}$}, while  \textbf{Con}$_{EM}$ is richer than \textbf{Gau$_{EM}$}. Just like we can object to \textbf{Frc}$_{EM}$ on the ground that it does not exhaust all the invariant structures of electromagnetism formulated via potential fields (see Dewar [2019a]), we \textit{may} object to  \textbf{Con}$_{EM}$ on the ground that it adds invariant structures to the original theory. To be clear, I am not objecting to the principal bundle approach to electromagnetism per se: on the contrary, due to its rich expressive power, it may be preferred on empirical grounds. The point is just that---apart from that this does not rely on a systematic method---the theory is physically different from the original one, while \textbf{Gau$_{EM}$} captures exactly the theory represented by original $A$-models minus  the gauge degree of freedom. 

\subsection{Category Theory and Ontological Perspicuity}

I mentioned before that the advantage of reduction of dispensing with the surplus structure is not as significant as it sounds (Section 2.1). Let's delve into this claim more. We may consider there being two  paradigms regarding modelling.  According to the `old' paradigm, there should be a one-to-one correspondence between models and physical situations so that a model contains exactly the amount of information of the physical situation it models.\footnote{The labelling of `old' and `new'  is not intended to reflect the choronological order of the views. }  According to the `new' paradigm, each physical situation is represented by a bunch of models that are considered physically equivalent, without collapsing these models into one or selecting a privileged representative.

The old paradigm is hard to implement in practice.  For example, many  hold some version of Leibnizian equivalence or diffeomorphism invariance for spacetime theories.  According to the principle of diffeomorphism invariance, every model that is diffeomorphically related to a solution to the relevant dynamical equations is also a solution. Since it is physically unmotivated to distinguish between these models, they are considered physically equivalent.  As such, our models have a minimal representational redundancy highlighted by the indispensable role of isomorphisms. 

For many then, the move from reduction to sophistication is a move within the new paradigm, featuring an expansion of the collection of models representing a physical situation. (This is, of course, all familiar from Dewar's [2019a] discussion. I have only made it more rigorous.)   Now, those who accept the new paradigm may still ask what justifies this expansion---after all, we can in principle keep expanding the class of models by adding more morphisms. The justification, as I mentioned earlier, involves the highly analogous and important roles played by familiar isomorphisms and symmetries, crystalized in the unified treatment of them in our category-theoretical framework. From the point of view of category-theoretic physicists, there is no formal or practical reason not to expand in this way.   In this sense the expansion results in a more `joint-carving' representation of reality---though the `joints' here belong to the class (or category) of all models, not the world.

But, make no mistake---I do not intend to defend this approach without reservation. I do agree with various authors that it faces the charge of ontological (and ideological) non-perspicuity like the quotient view, which other approachs like reduction and internal sophistication do not face or to a less degree.\footnote{Against the quotient view, Sider famously wrote: `[I]t is intuitively unsatisfying to give no answer to the question of why relations of equivalence hold. When multiple equally good ways to represent are available, it is natural to ask why that is, to ask what it is about reality that enables it to be multiply represented'. (Sider [2020], p.207) For those who share this view, as I mentioned before, both my strategy and [A]-models would be equally unsatisfying. 
	
	Sider also immediately noted a division of opinions: `the demand for an explanation of equivalence in terms of third languages as akin to the explanatory demands that motivate physics: in each case we seek to understand how phenomena arise from reality’s ultimate constituents. But foes of metaphysics will regard the demand for metaphysical understanding as being a perversion of the legitimate demand for scientific understanding.' (p.208) This division, as I see it, corresponds to the opposition between the old and new paradigms of modelling. 
	
	As various authors (such as Martens and Read [2021]) point out, this problem doesn't apply as much to internal sophistication like the principal bundle approach. In that case, we do know the structures preserved by  symmetries (see Appendix A). However, there is still the analogy: models have symmetry-variant parts in external sophistication, just like models have isomorphism-variant parts in internal sophistication or sophisticated substantivalism (see also Footnote \ref{ont_UF}, Section 5). }  This is the case despite the virtues of the approach such as its physical naturalness and its
 formal elegance within category theory. Since there is no systematic way to recover the internal (set-theoretic) structure of the SRMs reconstrued as isomorphic---we only know their external behaviors in the relevant category---this approach thus provides no obvious answer to the question what the structure of reality is according to the models. This is perhaps not a serious  defect for certain category theorists who vehemently uphold that an object is what it does. But to go into that route would require a radical departure from our usual demand of ontological perspicuity.\footnote{Things might be different if we adopt a completely different logical foundation as in Section 5.}  That is, we would have to embrace a fundamental characterization of reality through the structural relations \textit{between} models rather than within models, and this may be too exotic even for structuralists to swallow.

\section{Natural Operator Algebraicism}

When we apply the aforementioned strategy and appeal to objects that are covariant under symmetry transformations, we are in the dark for what the objects are in terms of their internal structure. This counts against the category-theoretical strategy despite its other appeals. However, if we appeal to algebraic models instead of traditional models (albeit this task being highly nontrivial), the internal structure of which can be made explicit algorithmically using the tool of natural operators. I shall call this approach `natural operator algebraicism.'\footnote{In Chen and Fritz ([2021]), the resulting algebraic objects are called `field algebras' and the approach `dynamical algebraicism'. I use different labels for different emphases.} Due to the limited space and technical complexity of this approach, I will only provide a \emph{very preliminary} treatment of this subject.\footnote{I believe the current bite-sized treatment of natural operator algebraicism to be appropriate for the aim and scope of the paper, and hope the details included are suggestive enough towards the intended philosophical moral. Readers are encouraged to consult other work such as Kolar et al [1993] and Chen and Fritz [2021], as well as develop the strategy I sketch here. } The upshot is to suggest that for algebraic models, we have a systematic way of discerning the algebraic structures invariant under the symmetries, and thus overcome the main problem of external sophistication. 

First, what are algebraic models? To illustrate, let's consider a simple world with only a (real-valued) scalar field $\Psi$. It has field configurations $\psi_1,\psi_2,...$, one of which is actualized---call it $\psi$. \footnote{In previous sections (except for Section 1), I didn't distinguish between field configurations and fields because the distinction is unnecessary (only field configurations were relevant).}  In the traditional framework, a physical field is a function on the spacetime manifold, and thus a typical model includes a manifold together with the field configuration. This in our case results in models of the form $\langle M,\psi\rangle$. Reading such a model literally and realistically, the fundamental ontology includes a substantival manifold and a scalar field. Geroch ([1972]) pointed out that instead of defining scalar field configurations as smooth functions on a manifold, we can characterize them entirely through their algebraic relations with each other and thereby dispense with the underlying manifold. These algebraic relations define a commutative ring often written as $C^\infty (M)$.  We can subsequently define other notions needed for physics up to general relativity without reference to manifolds (although in this hypothetical simple world, there isn't much physics to do).\footnote{Again, note that we are focusing on classical theories. In quantum field theories, the nature of models may be very different. Since all configurations have a certain `amplitude', there may not be configurations that can be singled out as actual.
} The resulting model would then take the form of $\langle C^\infty(M), \psi\rangle $, where $\psi$ can be considered as a $0$-ary operator that picks out an element of $C^\infty(M)$. This model is an algebraic structure, known as a `pointy ring'.   Reading this model literally and realistically, the fundamental ontology includes all possible field configurations, but no underlying spacetime, and the fundamental ideology includes all algebraic operators as primitive relations between these configurations. Some, such as Butterfield ([1989]), have found the ontological commitment to these possible entities and their relations highly problematic, but there are ways to lessen this worry.\footnote{For example, if one is primitivist about laws of nature, namely that laws are brute fact irreducible to spatiotemporal patterns of the world, one may assign some ontological status to the possible field configurations that the laws refer to. Moreover, if we turn to quantum  theory, the path integral formulation explicitly involves summing over all possible field configurations, which all have salient physical significance (e.g., have amplitudes that contribute to the values of observables). It is also worth noting that it is typical for relationalists to appeal to merely possible material bodies or processes (for example, see Belot [1999]).} For example, a field configuration can be considered as a state of the field, which can be quantified over using higher-order logic, which is independently motivated (see for example Bacon [2022]; I won't go into this discussion further.) Now, $\langle C^\infty(M), \psi\rangle $ is an example of algebraic models.  In general, algebraic models are exactly those that are algebraic structures.\footnote{\label{alg_str} To remind the reader (see Footnote \ref{Alg}): by `algebraic structure', I mean a set of elements equipped with algebraic operators that map between elements.  Traditional spacetime models are counterexamples---fields are not algebraic operators that map between spacetime points. Also, note that according to this definition,  full models for general relativity discussed in Geroch ([1972]) and other authors (e.g., Rosenstock et al [2015]) are not algebraic models, even though they are built on an algebraic structure, namely $C^\infty(M)$. This is because the metric field and other fields they define are neither algebraic elements nor operators. In contrast, full models discussed in Chen and Fritz ([2021]) are algebraic models, where the metric field is included in the algebraic structure on a par with other fields. }  Chen and Fritz ([2021]) develop this approach in a way that an algebraic model does not have to include scalar fields, and can include many types of physical fields all at once (such as 1-form field configurations, spinor field configurations, tensor field configurations). The resulting model would take the form of $ \mathfrak{A}(\Psi,\Phi,\Xi,...)$, where $\mathfrak{A}$ is the relevant algebraic structure characterizing different fields $\Psi,\Phi,\Xi,...$, including $0$-ary operators that pick out realized field configurations as explained before.

 The relevance of algebraic models to our discussion is that we have the method of natural operators from category theory for finding out all the algebraic structures in them given all the structure preserving morphisms between the models. Let's illustrate this method using a simple example of groups---a group is a simplest kind of algebraic structures. We start with the category of groups \textbf{Grp}, with all groups for objects and all group homomorphisms for morphisms. Given these homomorphisms, how do we know that the objects are groups---that is, how do we discern the internal structure of a group from the behavior of these external maps from it? To proceed: we can define all sorts of functors on \textbf{Grp}---for instance, a functor to the category of sets \textbf{Set} or to \textbf{Grp} itself. Furthermore, a `natural transformation' is a map between functors that preserves their behaviors (more detailedly, it is a family of morphisms that transform the objects in the codomain of the functors that commute with the corresponding morphisms in the codomain of the functors; see Figure 1 (left)).  Now, a natural operator is just a natural transformation except that it can take $n$ arguments rather than just one.\footnote{This formalizes the intuitive idea that the behavior of a natural operator on an algebraic object is determined by the algebraic structure, since `not natural' typically means `not unique' (for example, the scalar multiplication is a natural operator on a vector space while the cross product is not).} Importantly, we can recover the original algebraic structure by obtaining all the natural operators. For the category \textbf{Grp}, for any group $G$, the familiar group operator $*:G\times G\to G$ is precisely a natural operator on \textbf{Grp}.  One can check that it is a natural transformation from the functor that maps any group $G$ to $|G| \times |G|$ to the functor that maps any group $G$ to its underlying set $|G|$ (Figure 1 (right)).\footnote{The reason that we are utilizing the underlying set of a group is that the natural operator is not necessarily a homomorphism between groups and therefore should be construed as a map between sets (i.e., a morphism in the \textbf{Set}). This is because the natural operator does not necessarily commute with itself if the group in question is not abelian. For example, the map $(a,b)\mapsto ab$ and $(a',b')\mapsto a'b'$ do not necessarily commute with the group multiplication, since $(aa',bb')\mapsto aa'bb'$ which is unequal to $aba'b'$ is the group is nonabelian. (Note we omit the notation for the group operator for brevity following the convention.)}  

\begin{figure}[h]
	\centering
	\begin{tikzcd}
		F(X) \arrow[r, "\alpha_X"] \arrow[d, "F(f)"'] & G(X) \arrow[d, "G(f)"] \\
		F(Y) \arrow[r, "\alpha_Y"'] & G(Y)
	\end{tikzcd}
	\hspace*{8mm}
	\begin{tikzcd}
		{|G|}\times {|G|} \arrow[r, "\eta_G"] \arrow[d, ""'] & {|G|} \arrow[d, ""] \\ {|H|}\times {|H|} \arrow[r, "\eta_H"'] & {|H|}
	\end{tikzcd}
	\caption{(left) the generic commutative diagram for the natural transformation $\alpha$, where $F,G$ are functors from category $C$ to $D$, with $X,Y$ being objects in $C$; (right) the simplified diagram for the natural transformation $\eta$ giving rise to the natural group operator.}
\end{figure}
Hence, there is no problem recovering the defining operator $*$ for a group as a natural operator for \textbf{Grp}. Although a group is only equipped with one primitive operator, there can be many more natural operators. For example, $x\mapsto x^3$ ($x\in G$) is a natural operator in $G$ but is not primitive according to our familiar definition of a group. But we do not need to worry about this redundancy. We can consider an algebraic object to be conceptually equivalent to the collection of all its natural operators, rather than a minimal set of primitives, which can be arbitrary.

In this example, we already know the algebraic objects in question are groups, since they are what we start with for illustration, but the same method can be applied to any category of interest.\footnote{For another simple illustration that is slightly more analogous with our case of interest, we can start with the category of rings with ring homomorphisms for morphisms.  Now like adding symmetries, we can add morphisms that do not preserve the multiplicative structure of rings into the category. The result, as can be expected, is the category of groups, which can be revealed by the method of natural operators. Notably, for pointy algebraic structures mentioned earlier, $0$-ary operators (which we use to pick out an element representing actualized state) can also be recovered as natural operators, as long as all morphisms preserve the constants (which is expected since homomorphisms commute with \textit{all} operators).

For the underlying technical result and proof, see Yuan's ([2013]) blog entry on natural operations, in which he proves the equivalence of the category of models defined by natural operators and the original category of algebraic models equipped with a forgetful functor to \textbf{Set}. 
}    To apply this method to physical models, suppose we start with a category of algebraic models with redundant degree of freedom according to a symmetry. As before, we can render SRMs as isomorphic by adding new morphisms based on this symmetry into the category. Then, we can find out the desired algebraic structure of the new models that are invariant under the symmetries by applying the above method of natural operators. It is of course not a trivial task to obtain a natural operator for a given category, but it is at least a mathematically well-defined procedure.\footnote{However, to give any meaningful example of this would get too technically involved, and in any event may be premature given the preliminary stage of natural operator algebraicism in the literature. The following discussion will largely gross over the technical matter and focus on philosophical aspects that are relevant to our subject matter.}

I should clarify that the procedure described above does not reflect how we actually formulate algebraic models invariant under symmetries. That is, we do not start with the category of certain algebraic models and add symmetries. This is because, first of all, when formulating any algebraic model in the category-theoretical framework, we typically have already taken symmetries into account, and have already appealed to natural operators that commute with the symmetries. Secondly, symmetries are usually formulated in the framework of traditional models, and I am not aware of any existent literature that discuss symmetries for algebraic models.  So the first step of adding symmetries to algebraic models is neither practically necessary nor demonstrably feasible. 
What happens instead is the following, according to the procedure used in Chen and Fritz ([2021]). We start with the category of relevant spacetime structures \textbf{Spacetime}, and consider physical fields on spacetime as functors $F:\textbf{Spacetime}\to \textbf{Set}$, from which we obtain natural operators. The morphisms in \textbf{Spacetime} can be stipulated to include not only smooth maps but also the desired symmetries. The functors on such a category automatically respect all these symmetries.\footnote{More realistically, different fields may have different symmetries. If a given symmetry does not apply to certain physical fields, we can incorporate a forgetful functor into the field functors.} Then we can obtain natural operators with respect to these functors.  These operators can define algebraic structures that are invariant under the said symmetries, just as the reconstructed groups are invariant under the group isomorphism in \textbf{Grp}. 

Now for our purposes, let's turn to a more physically realistic case, without worrying about the technical details (although the following discussions are not technically precise, I hope to illustrate how the strategy works philosohically). Consider classical electromagnetism again. Suppose we start with the potential one-form field $A$, which is conceptualized as  a functor from $\textbf{Spacetime}$ to $\textbf{Set}$ as before, and take the gauge symmetries into account. Then, the algebraic structures invariant under  the gauge symmetries that we can obtain through natural operators would amount to those that characterize not the original $A$-field, but the equivalence classes of the field configurations related by gauge symmetries. That is, the elements of the resulting algebraic models  (ignoring other physical fields) can be represented by $[A_1],[A_2],...$, where $A_i$s are 1-form field configurations and $[A_i]$s are equivalent classes of them related by gauge symmetries.\footnote{This is intuitive given that we have shown that the category of $[A]$-models is equivalent to the category \textbf{Gau$_{EM}$}, and the method of natural operators does not distinguish between equivalent base categories.} As we recall from previous discussion, if we assume spacetime is simply connected, then these algebraic elements can be equally represented by $F_1,F_2,...$ where $F_i$s are 2-form electromagnetic field configurations that are gauge invariant.  

In this example, we can see the core philosophical move  in response to the problem of external sophistication quite easily despite the technical complexity. In the previous section,  we see that the $[A]$-models as well as the $A$-models in \textbf{Gau$_{EM}$} face the charge of ontological (and ideological) nonperspicuity. In particular, we cannot obtain the set-theoretic structure of models in \textbf{Gau$_{EM}$} so that we are left with no obvious answer to the question what structures of reality are according to these models. But we do have a clear idea of how these models behave (indeed, assuming that the category is physically sufficient, we have a good enough handle on them for doing all the relevant physics).  Now, the strategy of algebraic models is that we postulate certain behaviors of fields as part of the fundamental structure of reality, and thereby `internalize' external sophistication and dodge the charge that we do not know the structure of reality.\footnote{\label{int_ext} The basic idea of the algebraic approach is that these `behaviors' are sufficient for reasoning with field equations and making empirical predictions; see Geroch [1972]. 
		
		By `internalize external sophistication', I do not mean that this approach is internal sophistication, just like I do not think the $[A]$-model approach belongs to sophistication. In particular, in the setting of Chen and Fritz [2021], a symmetry does not induce an isomorphism, but is just an identity map as in reduction. } In the case of $[A]$-models,  part of the charge of ontological nonperspicuity is that it makes reference to $A$-field that is variant under symmetries and appeal to the notion of equivalence classes, neither of which seems suitable to be among the fundamental furniture of reality. But if we postulate the behaviors of $[A]$ as fundamental, then they are just as perspicuous as the behaviors of the $F$ field. No equivalence class or symmetry-variant entity needs to be invoked. 

It is important to clarify that algebraic models themselves are not sophisticated models: in the formulation above, the gauge symmetries do not induce (nontrivial) isomorphisms between algebraic models. This, I think, is an important distinction between my proposal and other sophisticationsists' in the literature (see Footnote \ref{int_ext}). Indeed we can say that they are reduced given that they are formulated entirely through algebraic operators that are invariant under the symmetries. This is not too surprising, since the purpose is to fix the problem of ontological nonperspicuity, that is, to discern the structure of reality, which is invariant under symmetries. Note the original isomorphisms between spacetime models (induced by manifold diffeomorphisms) \textit{also} do not induce (nontrivial) isomorphisms between algebraic models. This is well known in the literature: for example, in Earman ([1989]), it is noted that various spacetime models related by the hole transformations are mapped to the same algebraic model.\footnote{Earman's point is that since spacetime models related by the hole transformations are mapped to the same algebraic model, and (in)determinism is only plausibly formulated in terms of spacetime models, `the proliferation of Leibniz algebras does not threaten determinism' (Earman [1989, p.193]).}  My main point is this: if we take algebraic models realistically and endorse their algebraic structures as exhausting the structure of reality, then we have a way of knowing the invariant structure of reality when we follow the category-theoretic method of reconceptualizing symmetries as isomorphisms.  In this sense, the worry about ontological nonperspicuity is lessened.\footnote{ Now, one may  ask whether the fact that algebraic models we arrive at are not sophisticated counts against my symmetry-as-isomorphism approach. That is, is this approach dialectically unstable because we use models for which symmetries are not isomorphisms to address the problem of nonperspicuity faced by conceptualizing symmetries as isomorphisms?  I don't think the strategy is unstable or inconsistent if we take the algebraic models to be fundamental as explained in the text, unless we want to argue for the sort of sophistication that requires the fundamental models to be sophisticated. This is not my position as I have noted earlier, although I certainly do not \textit{object to} it (which can be achieved in the case of algebraic models by letting the base category consist of principal-bundle models rather than spacetime models). In any case, my main point that symmetries between fields on spacetime can be treated on a par with symmetries of spacetime is compatible with algebraicism.} Insofar as the algebraic models capture the ontology that \textit{explains} the symmetries in the category of spacetime models, we may consider the category as more fundamental than the category of spacetime models that we start with.\footnote{Spacetime models, then, can be considered `effective' models. But one may wonder why  we should care about symmetries between nonfundamental models. In general, I think nonfundamental models (or theories) are also important subject of philosophical reflection since, after all, we do not have any plausible fundamental models given the mystery of quantum gravity (see, for example, Williams [2019]). }

Technically, the success of this strategy would rise or fall with the algebraic approach itself, and given the preliminariness of algebraic models, I don't think we can be completely confident about the technical feasibility of this strategy. But putting technicalities aside, one might still wonder how or whether the algebraic approach is motivated independently of the strategy here.  I can't hope to be comprehensive here given the scope of the paper, but I can briefly mention some potential benefits below.\footnote{These benefits are associated with the algebraic approach in general, and it waits to be seen whether they can be meaningfully combined with the strategies in this paper---hence `potential benefits'. I have argued elsewhere in more detail the advantages of the algebraic approach. }  In the algebraic approach, the definition and reasoning with derivatives and vectorial quantities are more elegant than in the standard approach.  For example, Moerdijk and Reyes ([1991], \textsection IV-V) noted that in the standard approach, the exterior derivative of an (n-1) form is interpreted as the circulation along $n$-cube, which involves integration over the boundary of the cube. In contrast, in the algebraic approach, no integration is involved, with the exterior derivative interpreted as the circulation along an infinitesimal cube, directly defined through simple algebraic relations. This also leads to a simpler metaphysics of  tangent space, which now is considered as simply a part of physical spacetime, instead of (say) an abstract space whose interaction with physical entities is mysterious and needs to be explained (see Chen [2022]). Moreover, the algebraic approach allows generalizations that are not possible for the standard manifold-theoretic approach, among which non-commutative geometry is a prominent example and potentially useful for quantum physics (see Connes and Marcolli [2008]; Huggett et al [2021]). This is intuitive: since we take relations between fields to be primitive without an underlying manifold, those relations can possibly violate the restrictions imposed qua the manifold in the standard approach.

\section{Symmetry as Identity}

Finally, I would like to expound on the significance of isomorphisms by appealing to Univalent Foundations (UF, or homotopy type theory (HoTT)). Consider this principle:

\begin{quote}
	\textsc{Identity.} All isomorphic structures are identical.
\end{quote}
Mathematical structuralists have standardly embraced \textsc{Identity} for mathematical entities. But \textsc{Identity}  can also apply to physical structures, motivated by the increasingly structuralist trend in modern physics (see Ladyman and Ross [2007]). As we know, distinct isomorphic models are prone to causing philosophical puzzles such as the  hole argument. A common response to the hole argument is that the diffeomorphically related models represent the same physical situation. If we embrace \textsc{Identity}, then we can enforce this response and systematically resolve the disparity between distinct isomorphic mathematical models and identical physical situations they represent. Combining the reformulation of symmetries as isomorphisms together with \textsc{Identity} further allows us to apply this idea to symmetry-related models so that we can formalize symmetries as representational redundancies on a par with isomorphisms. As I argue elsewhere, among the existent logico-mathematical foundations, UF alone implements  \textsc{Identity} through the univalence axiom and structure identity principle (UFP [2013]; Appendix B).\footnote{As I argue elsewhere in favor of UF-implemented ontic structuralism, it is tricky to describe a \textit{pure} structure without appealing to \textsc{Identity}. For example, even if we get rid of ordinary individuals in our models as proposed in Dewar [2019b], we still face the problem of individuals at higher orders. }

Note that this approach is different from reducing symmetries to identity maps understood in our standard logico-mathematical framework.\footnote{It is worth emphasizing the important role that isomorphisms play in general mathematics. Besides the category-theoretic application that a mathematical structure is defined as being invariant under its isomorphisms, one most important application of isomorphism is to translate proven statements from one mathematical object to those of an isomorphic one, for which a direct proof would be more difficult. For example, $\mathbb{Z}/10\mathbb{Z}$ is isomorphic to  $\mathbb{Z}/2\mathbb{Z} \times \mathbb{Z}/5\mathbb{Z}$. It is easier to find multiplicative inverses in the latter than in the former.}   UF does not trivialize the notion of isomorphism by equating it with the usual notion of identity but rather enriches and expands the latter into the former (see, for example, Ladyman and Presnell [2018]).

A category $A$ (or rather `precategory'; see Appendix B) is called \emph{univalent} iff:

\begin{quote}
	For any $a,b:A$, $a\simeq b$ implies $a=b$. (UFP [2013], \textsection 9.1;  Appendix B)
\end{quote}
where $\simeq$ denotes isomorphism, which means that, like in standard category theory, there is a morphism $g : Hom_A(b, a)$ such that $g \circ f = 1_a$ and $f \circ g = 1_b$. This allows us to formally identify isomorphic objects that we cannot distinguish between in standard category theory. In a univalent category, every isomorphism is promoted to a distinct `way of identification', a notion that is unique to UF/HoTT (Appendix B; Ladyman and Presnell [2015]).\footnote{Here's a very quick explanation of identity claims in HoTT, assuming basic familiarity with type theory. First of all, an identity claim $a=_Ab$ for $a,b:A$ is a type. Importantly, the type can have more than one element, each of which is intuitively a way of identification. A useful interpretation of this claim is to understand $a,b$ as points (or paths, in higher-order cases) in a topological space and the identification as a path that connect them (or homotopies in higher-order cases).}

Thus, in UF, it is demonstrably true that for any $\langle M, A\rangle, \langle M,A'\rangle: \textbf{Gau}_{EM}$ with $A'=A+ds$, we have $\langle M, A\rangle= \langle M,A'\rangle$.\footnote{More generally, for any $\langle M, A\rangle, \langle M',A'\rangle: \textbf{Gau}_{EM}$ with $A'=A+ds$ and $M$ diffeomorphic to $M'$, we have $\langle M, A\rangle= \langle M',A'\rangle$.} The gauge transformation determined by $s$ exactly constitutes a way of identification between the `two'.\footnote{Here,  `two' refers to two ways of representation from an external point of view. From the internal point of view within UF, there is only one object. Let $=$ be referential identity and $\equiv$ be representational equality. In UF, we can assert $\langle M, A\rangle= \langle M,A'\rangle$ but not $\langle M, A\rangle\not\equiv \langle M,A'\rangle$. The latter is not even grammatical.} Thus we preserve the representational richness of the gauge field $A$ which plays important roles in theorizing, e.g., when spacetime has effectively nontrivial cohomology groups (Section 3). Meanwhile, we do not have to introduce the gauge degree of freedom into our ontology. When counting the objects in $\textbf{Gau}_{EM}$ in UF, we count all the gauge-related fields as one and the same (based on our standard paraphrase of cardinality in terms of identity). 
So we are not committed to an ontology any more extravagant than the one with only  gauge-invariant fields.  To emphasize, we may understand the merit of this UF-implemented approach as \emph{preserving the full expressive power of gauge theories without introducing gauge redundancies into one's ontology}.\footnote{\label{ont_UF} Unfortunately, there is no space here to fully discuss the issue of ontological perspicuity in the UF foundation, which is a subject of a separate treatise. But I by no means want to avoid this question. Briefly, here's the way I think about it, continuing from the discussion in Section 3. Recall that the ontological perspicuity problem with the $A$-models in $\textbf{Gau}_{EM}$ is that we cannot read the symmetry-invariant structures from the models. Adopting UF as the foundational framework may be considered as relinquishing precisely this requirement that a faithful representation of the structure of reality should be free of symmetry-variant elements (when symmetries are construed as isomorphisms). In this foundational language, we only pin down the structure of reality up to symmetries. This is analogous to how haecceities/individualities are treated in UF: we can use them to represent reality, without committing to their reality. Thus, to the question of what exist according to classical electromagnetism represented by $\textbf{Gau}_{EM}$, we can answer `the $A$-field' without worrying about introducing its surplus degree of freedom realistically.} 

Now, how does this differ from the standard practice of using gauge fields while denying the gauge degree of freedom as physical (see, for example, Luc [2023])? The problem  with the standard practice, again, is that the disparity between our representational apparatus and what we intend to represent is prone to causing philosophical confusions. In standard logico-mathematical framework, we have that  $\langle M, A\rangle\neq \langle M,A'\rangle$ (or just $A\neq A'$), and yet we stipulate that they represent the same physical situation. But this disparity can make one think: if the gauge degrees of freedom are so theoretically valuable, why not think that they do correspond to physical reality even if they are empirically underdetermined? One is then torn between ontological perspicuity and other theoretical virtues (see Section 3; see also Footnote \ref{ont_UF}). 
In the words of Wittgenstein: `most of the propositions and questions of philosophers arise from our failure to understand the logic of our language.' (Tractatus 4.003) The aspiration of the UF-implemented approach of symmetry-as-identity is precisely to cure the confusion caused by our inappropriate language.



\newpage

\appendix

\section{Principal Bundle}

In this section, I provide definitions and a proof relevant to Section 2 related to the notion of principal bundle. For more background on Lie-algebra-valued differential forms on manifolds or on principal bundles relevant for our discussion, the readers may turn to Burke ([1985]) and Nakahara ([2003]).

	\begin{definition}[Principal bundle]
		A \emph{principal bundle} is a quadruple $\langle P, M, G, \pi\rangle$, where $P$ is a smooth manifold called the \emph{total space}, $M$ is a smooth manifold called the \emph{base space}, $G$ is a Lie group called the \emph{structure group}, and $\pi : P \rightarrow M$ is a smooth surjective map called the \emph{projection}. The following conditions must hold:
		\begin{enumerate}
			\item For each point $m \in M$, the preimage $\pi^{-1}(m)$ is a copy of $G$.
			\item There is a smooth right action of $G$ on $P$ such that the projection map $\pi$ is $G$-equivariant, i.e., $\pi(pg) = \pi(p)$ for all $p \in P$ and $g \in G$.
			\item The principal bundle must be locally trivial, i.e., there exists a neighborhood $U \subseteq M$ of any point $m \in M$ and a diffeomorphism $\phi : \pi^{-1}(U) \rightarrow U \times G$ such that $\pi = \mathrm{pr}_1 \circ \phi$, where $\mathrm{pr}_1 : U \times G \rightarrow U$ is the projection onto the first factor. Moreover, the diffeomorphism $\phi$ must be $G$-equivariant, i.e., $\phi(p g) = (\phi(p))g$ for all $p \in P$ and $g \in G$.
		\end{enumerate}
	\end{definition}
	
\begin{definition}[Connection]
	For principal bundle $\langle P, M, G, \pi\rangle$, a connection 1-form on $P$ is a $\mathfrak{g}$-valued 1-form $\omega \in \Omega^1(P, \mathfrak{g})$ satisfying the following two conditions:
	
	\begin{enumerate}
		\item $\omega$ is $G$-equivariant, i.e., for any $g \in G$ and $X \in T_pP$ (the tangent space at point $p$ in $P$), we have $(R_g^* \omega)_p = \mathrm{Ad}(g^{-1}) \omega_{p g}$, where $R_g$ denotes the right action of $G$ on $P$, and $\mathrm{Ad}$ is the adjoint representation on $\mathfrak{g}$.
		\item For any $\xi \in \mathfrak{g}$ (the Lie algebra of $G$) and $X_\xi$ the vector field on $P$ associated to $\xi$ by differentiating the $G$ action on $P$,  $\omega(X_\xi) = \xi$. (Basically, $X_\xi$ represents how each point in the bundle would infinitesimally move  according to $\xi$. This condition makes sure that the structure of $P$ captured by connection $\omega$  is compatible with the dictation of $\mathfrak{g}$.)
	\end{enumerate}

\end{definition}

	\begin{theorem}
		Given a principal bundle $\langle P, M, G, \pi\rangle$, any gauge transformation between two connections $\omega$ and $\omega'$ is an automorphism on $P$.
	\end{theorem}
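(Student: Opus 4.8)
The plan is to realize the gauge transformation as a concrete fiber-preserving map and then verify directly that it satisfies the defining conditions of a bundle automorphism. Recall that in the principal-bundle framework a gauge transformation relating $\omega$ and $\omega'$ is encoded by a smooth map $\sigma : P \to G$ that is equivariant in the sense $\sigma(pg) = g^{-1}\sigma(p)g$ for all $p \in P$ and $g \in G$ (in the abelian case $G = U(1)$ relevant to electromagnetism this condition collapses to $\sigma(pg) = \sigma(p)$, so $\sigma$ descends to a function on $M$, matching the `twisting the Slinky' picture of Section 2.1). From such a $\sigma$ I would define the induced map $\Phi_\sigma : P \to P$ by $\Phi_\sigma(p) = p\cdot \sigma(p)$, using the smooth right $G$-action guaranteed by condition 2 of the principal-bundle definition. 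The goal is then to show that $\Phi_\sigma$ is an automorphism, i.e. a $G$-equivariant diffeomorphism of $P$ covering $\mathrm{id}_M$, and that pulling back $\omega$ along it yields exactly $\omega'$.

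First I would check the three structural properties. Smoothness of $\Phi_\sigma$ is immediate from the smoothness of $\sigma$ and of the group action. Fiber-preservation, $\pi \circ \Phi_\sigma = \pi$, follows directly from the $G$-invariance of $\pi$, since $\pi(p\cdot\sigma(p)) = \pi(p)$; hence $\Phi_\sigma$ covers the identity on $M$. For $G$-equivariance I would compute $\Phi_\sigma(pg) = pg\cdot\sigma(pg) = pg\cdot g^{-1}\sigma(p)g = p\sigma(p)\cdot g = \Phi_\sigma(p)\cdot g$, where the equivariance of $\sigma$ is exactly what makes the middle step go through.

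Next I would establish invertibility by exhibiting the inverse explicitly. The natural candidate is $\Phi_{\sigma^{-1}}$, where $\sigma^{-1}(p) := \sigma(p)^{-1}$; one checks that $\sigma^{-1}$ is again an equivariant gauge map, since $\sigma^{-1}(pg) = (g^{-1}\sigma(p)g)^{-1} = g^{-1}\sigma(p)^{-1}g$. A short computation using equivariance then gives $\Phi_\sigma\circ\Phi_{\sigma^{-1}} = \Phi_{\sigma^{-1}}\circ\Phi_\sigma = \mathrm{id}_P$, so $\Phi_\sigma$ is a diffeomorphism and therefore an automorphism of $\langle P, M, G, \pi\rangle$. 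Finally, to connect with the connections themselves, I would verify that $\Phi_\sigma^*\omega = \omega'$ reproduces the standard transformation law, which in the $U(1)$ case recovers $A' = A + ds$ of Section 2.1 via the Maurer--Cartan term.

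The step I expect to be the main obstacle is the bookkeeping around the adjoint action: verifying that $\sigma^{-1}$ remains a legitimate gauge map and that the pulled-back form $\Phi_\sigma^*\omega$ still satisfies both defining conditions of a connection (the $\mathrm{Ad}$-equivariance $(R_g^*\omega)_p = \mathrm{Ad}(g^{-1})\omega_{pg}$ and the normalization $\omega(X_\xi) = \xi$) requires tracking the $\mathrm{Ad}$ factors consistently. In the abelian electromagnetic case all conjugations are trivial and these checks reduce to the elementary identities already used in Section 2.1; the only genuine content in the general nonabelian setting is managing those factors carefully, after which the automorphism property follows routinely.
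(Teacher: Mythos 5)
Your proposal is correct and follows essentially the same route as the paper's proof: build the fiber-preserving map induced by the equivariant map $\sigma : P \to G$, then check that it covers $\mathrm{id}_M$, commutes with the right $G$-action, and is invertible via the pointwise inverse of $\sigma$. The only differences are a harmless convention (you use $p \mapsto p\,\sigma(p)$ where the paper uses $p \mapsto p\,g(p)^{-1}$, so your map is exactly the paper's inverse) and your additional plan to verify $\Phi_\sigma^*\omega = \omega'$, which the paper does not carry out but which would only strengthen the result.
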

	
	\begin{proof}
	A gauge transformation between the connections  $\omega$ and $\omega'$ is given by a smooth map $g : P \rightarrow G$ such that
		\[
		\omega' = g^{-1} \omega g + g^{-1} \mathrm{d}g,
		\]
		where $\mathrm{d}$ is the exterior derivative, such that $g$ is $G$-equivariant: 
				\[hg(ph)=g(p)h\]
		for all $h\in G$ and $p\in P$.
		
	Let $\tilde{g}$  be a map from $P$ to $P$ such that $\tilde{g}(p)=pg(p)^{-1}$. We can show that $\tilde{g}$ is an automorphism on $P$.	To show that we need to show that it preserves the bundle structure, i.e., it covers the identity map on $M$, commutes with the right $G$-action on $P$, and is invertible.
		
		1. $\tilde{g}$ covers the identity map on $M$: this is clear from the definition (a $G$-action only maps between points on the same fibers; see Definition 7(2)). 
		
		2. $\tilde{g}$ commutes with the right $G$-action on $P$. From 			\[hg(ph)=g(p)h,\] we have	\[hg(p)^{-1}=g(ph)^{-1}h.\] Thus, we have \[\tilde{g}(ph)=phg(ph)^{-1}=phh^{-1}g(p)^{-1}h=\tilde{g}(p)h\]
		
		3. $\tilde{g}$ is invertible. Let $\tilde{g}^{-1}(p)=pg(p)$. We have \[ (\tilde{g}\tilde{g}^{-1})(p)=\tilde{g}(\tilde{g}^{-1}(p))=pg(p)g(p)^{-1}=p\]
\end{proof}
		
		\section{Structure Identity Principle}
		
In this section, I sketch definitions relevant to Section 5 on Univalent Foundations (UF) based on the presentation in UFP ([2013]). Please see UFP for more background on the univalence axiom and the notions preceding that, which are less directly relevant for our discussion and too extensive to cover.

\begin{definition}[Identity]
		Given a type \( A \) and elements \( a, b : A \), the identity type, \( Id_A(a, b) \) or \( a =_A b \),  can be defined by the following rules:

\begin{itemize}
	\item \textbf{Formation Rule}: Given a type \(A\) and elements \(a, b : A\), there is a type \(Id_A(a, b)\).
	\item \textbf{Introduction Rule}: For any \(a : A\), there is a term \(refl_a : Id_A(a, a)\) representing reflexivity.
	\item \textbf{Elimination Rule}: For any family of types \(C(x, y, p)\) dependent on \(x, y : A\) and \(p : Id_A(x, y)\), given a term \(c : \prod_{a:A} C(a, a, refl_a)\), then  there is a term \(f: \prod_{x,y:A}\prod_{p:Id_A(x,y)} C(x, y, p)\) such that $f(x,x,refl_x)\equiv c(x)$.
\end{itemize}
Informally, \textbf{Introduction Rule} says that everything is self-identical, and \textbf{Elimination Rule} says that if $x=y$, then for any claim true of $x$, we also have a proof that it is true of $y$, namely the principle of indiscernibility of identicals.

\end{definition}

		To understand category theory in UF, we start with the notion of precategory, which corresponds to the notion of category in the standard approach. We will see soon why the term `precategory' instead of `category' is used.

		\begin{definition}[Precategory]
			A \emph{precategory} A consists of the following:
			\begin{itemize}
				\item A type \( \text{A}_0 \) whose elements are called \emph{objects}. We write $a:A$ for $a:A_0$.
				\item For each pair of objects \( a, b : \text{A} \), a type \( \text{Hom}_A(x, y) \) whose elements are called \emph{morphisms} from \( a \) to \( b \).
			\end{itemize}
			They satisfy the usual conditions, such as the composition rule for morphisms and the existence of identity morphisms. We write the identity morphism for $a:A$ as   $1_a:Hom_A(a,a)$.
		\end{definition}
		The notion of isomorphism between objects are defined similarly as in standard category theory:
		\begin{definition}[Isomorphism]
			A morphism $f : Hom_A(a, b)$ is an isomorphism if there is a morphism $g :
			Hom_A(b, a)$ such that $g \circ f = 1_a$ and $f \circ g = 1_b$. We write $f\simeq g$.
		\end{definition}

		Now, it is attractive to formally identify isomorphic objects in a category, since we can only identify objects up to isomorphisms in category theory. So, we say that  a category is a precategory that satisfies univalence:
		\begin{definition}[Category]
			A precategory $A$ is a category if for any $a,b:A$, $a\simeq b$ implies $a=b$. 
		\end{definition}

		Let's turn to structures. Let $X$ be a precategory. 
		
		\begin{definition}[(P,H)-structure]
			A notion of structure (P, H) over $X$ consists of the following:
			
			\begin{itemize} 
				\item A type family P : X$_0$ → U. For each x : X$_0$ the elements of Px are called (P, H)-structures on x.
				
				\item For x, y : X$_0$, f : Hom$_X$(x, y) and $\alpha$ : Px, $\beta$ : Py, a mere proposition
				H$_{\alpha\beta}$( f ).
				If H$_{\alpha\beta}$( f ) is true, we say that f is a (P, H)-homomorphism from $\alpha$ to $\beta$.
			\end{itemize}
			H satisfies further conditions that amount to the usual properties of homomorphisms.
		\end{definition}
		
		\begin{definition}
			A precategory of (P, H)-structures, $A = Str_{(P,H)}$(X), consists of the following:
			\begin{itemize}
				\item  $A_0:\equiv \Sigma (x:X_0) Px$.
				
				\item For $(x, \alpha), (y, \beta) : A_0$, we define	$Hom_A((x, \alpha), (y, \beta)) :\equiv	\{f : x \to y\mid	H_{\alpha\beta}( f )\}$
			\end{itemize}
			
		\end{definition}
		\noindent For example, $X$ can be the precategory of sets, and $A$ can be the precategory of groups, or rings, or topological spaces that are built on sets.

		Finally let me present the structure identity principle:
		\begin{proposition}[The structure identity principle]
			If $X$ is a category, then $Str_{(P,H)}(X)$ is a category.
		\end{proposition}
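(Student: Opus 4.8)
The plan is to show that $A := Str_{(P,H)}(X)$ is univalent, i.e.\ that for any objects $(x,\alpha),(y,\beta):A_0$ the canonical comparison map from identities $(x,\alpha)=(y,\beta)$ to isomorphisms $(x,\alpha)\simeq(y,\beta)$ is an equivalence; this in particular yields the implication demanded by the definition of \emph{category}. The strategy is to decompose both the identity type and the isomorphism type as $\Sigma$-types over the corresponding data for $X$, and then reduce to a fibrewise comparison that is settled by the standardness of $(P,H)$ (the extra requirement, beyond the homomorphism axioms, that $H_{\alpha\beta}(1_x)$ and $H_{\beta\alpha}(1_x)$ together imply $\alpha=\beta$).

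First I would rewrite the two sides. By the characterization of paths in $\Sigma$-types,
\[
\big((x,\alpha)=(y,\beta)\big)\ \simeq\ \sum_{p:x=y}\big(p_*(\alpha)=\beta\big),
\]
where $p_*$ is transport along $p$ in the family $P$. On the isomorphism side I would first establish a lemma that a morphism $f:Hom_A((x,\alpha),(y,\beta))$ is an isomorphism in $A$ exactly when its underlying morphism is an isomorphism $f:x\simeq y$ in $X$ and moreover $H_{\alpha\beta}(f)$ and $H_{\beta\alpha}(f^{-1})$ hold; since each $H$ is a mere proposition (so $Hom_A$ is a subtype of $Hom_X$ and equality of morphisms in $A$ reduces to equality in $X$) and ``being an isomorphism in $X$'' is a mere proposition, this gives
\[
\big((x,\alpha)\simeq(y,\beta)\big)\ \simeq\ \sum_{f:x\simeq y}\big(H_{\alpha\beta}(f)\times H_{\beta\alpha}(f^{-1})\big).
\]

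Next I would invoke that $X$ is a category, so the canonical map $(x=y)\to(x\simeq y)$ is an equivalence; reindexing the second $\Sigma$ along this equivalence lets me sum over $p:x=y$ rather than over $f:x\simeq y$. It then suffices to produce, fibrewise in $p:x=y$, an equivalence between $p_*(\alpha)=\beta$ and $H_{\alpha\beta}(f_p)\times H_{\beta\alpha}(f_p^{-1})$ (with $f_p$ the isomorphism associated to $p$), since a fibrewise map over a fixed base is an equivalence iff it is so on each fibre. By path induction I may take $p\equiv\mathrm{refl}_x$, whence $f_p=1_x$ and the claim collapses to $(\alpha=\beta)\simeq\big(H_{\alpha\beta}(1_x)\times H_{\beta\alpha}(1_x)\big)$.

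This last equivalence is the heart of the argument and is exactly where standardness enters. I would set $R_x(\alpha,\beta):\equiv H_{\alpha\beta}(1_x)\times H_{\beta\alpha}(1_x)$ on $Px$ and observe that $R_x$ is (a) a mere proposition, since each $H$ is; (b) reflexive, since $H_{\alpha\alpha}(1_x)$ holds by the identity-homomorphism axiom; and (c) an implication of identity, $R_x(\alpha,\beta)\to(\alpha=\beta)$, by standardness. A reflexive mere relation that implies its own identity type forces $Px$ to be a set and is moreover equivalent to the identity type (the standard ``reflexive-relation'' criterion for sethood in UFP [2013]); this delivers precisely $(\alpha=\beta)\simeq R_x(\alpha,\beta)$ and completes the reduction. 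I expect the main obstacle to be the bookkeeping in the two middle steps---verifying the isomorphism-characterization lemma, so that inverses in $A$ are genuinely computed from inverses in $X$ via the propositionality of $H$, and then carrying the reindexing and fibrewise reduction through coherently---rather than the final standardness step, which is a clean application of the set-from-reflexive-relation principle.
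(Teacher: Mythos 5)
Your proof is correct and is essentially the canonical argument: the paper states this proposition without reproducing a proof, deferring to UFP [2013] (Theorem 9.8.2), and your route---paths in $\Sigma$-types, the characterization of isomorphisms in $Str_{(P,H)}(X)$ via the propositionality of $H$, reindexing along the equivalence $(x=y)\simeq(x\simeq y)$ given by univalence of $X$, path induction down to $1_x$, and the reflexive-mere-relation criterion applied to $R_x(\alpha,\beta):\equiv H_{\alpha\beta}(1_x)\times H_{\beta\alpha}(1_x)$---is exactly the proof in that source. One genuinely useful point you add: you make explicit the standardness hypothesis (antisymmetry of the preorder $\alpha\leq\beta:\equiv H_{\alpha\beta}(1_x)$), which the paper's definition of a $(P,H)$-structure buries in the phrase ``further conditions that amount to the usual properties of homomorphisms''; this hypothesis is not a homomorphism axiom, and without it the proposition as literally stated is false, so your proof is in fact more careful than the paper's presentation.
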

	\noindent It follows that isomorphic set-theoretic structures are identified in UF. This is because univalence entails that the precategory of sets is a category, in which all bijective sets are identified. For our purpose,  $Str_{(P,H)}(X)$ can be the category $\textbf{Gau}_{EM}$, whose objects are built on point sets. In this category, the symmetry-related $A_\mu$-models are identified.

	\newpage

\end{document}